\newtheorem{assumption}{Assumption}
\DeclareMathOperator{\E}{\mathbb{E}}
  \providecommand\BibTeX{{%
    \normalfont B\kern-0.5em{\scshape i\kern-0.25em b}\kern-0.8em\TeX}}}
\begin{document}

\title{Near-Optimal Sparse Allreduce for Distributed Deep Learning}


\author{Shigang Li}
\email{shigangli.cs@gmail.com}
\affiliation{%
  \institution{Department of Computer Science, ETH Zurich}
  \country{Switzerland}
}

\author{Torsten Hoefler}
\email{htor@inf.ethz.ch}
\affiliation{%
  \institution{Department of Computer Science, ETH Zurich}
  \country{Switzerland}
}

\renewcommand{\shortauthors}{Shigang Li and Torsten Hoefler}

\begin{abstract}

Communication overhead is one of the major obstacles to train large deep learning models at scale. Gradient sparsification is a promising technique to reduce the communication volume. However, it is very challenging to obtain real performance improvement because of  (1) the difficulty of achieving an scalable and efficient sparse \textit{allreduce} algorithm and (2) the sparsification overhead. This paper proposes O$k$-Top$k$, a scheme for distributed training with sparse gradients. O$k$-Top$k$ integrates a novel sparse allreduce algorithm (less than 6$k$ communication volume which is asymptotically optimal) with the decentralized parallel Stochastic Gradient Descent (SGD) optimizer, and its convergence is proved. To reduce the sparsification overhead, O$k$-Top$k$ efficiently selects the top-$k$ gradient values according to an estimated threshold. Evaluations are conducted on the Piz Daint supercomputer with neural network models from different deep learning domains. Empirical results show that O$k$-Top$k$ achieves similar model accuracy to dense allreduce. Compared with the optimized dense and the state-of-the-art sparse allreduces, O$k$-Top$k$ is more scalable and significantly improves training throughput (e.g., 3.29x-12.95x improvement for BERT on 256 GPUs).

\end{abstract}

\begin{CCSXML}
  <ccs2012>
  <concept>
  <concept_id>10003752.10003809.10010170</concept_id>
  <concept_desc>Theory of computation~Parallel algorithms</concept_desc>
  <concept_significance>500</concept_significance>
  </concept>
  <concept>
  <concept_id>10010147.10010257.10010293.10010294</concept_id>
  <concept_desc>Computing methodologies~Neural networks</concept_desc>
  <concept_significance>300</concept_significance>
  </concept>
  </ccs2012>
\end{CCSXML}

\ccsdesc[500]{Theory of computation~Parallel algorithms}
\ccsdesc[300]{Computing methodologies~Neural networks}

\keywords{distributed deep learning, allreduce, gradient sparsification, data parallelism}


\maketitle

\section{Introduction}

Training deep learning models is a major workload on large-scale computing systems. 
While such training may be parallelized in many ways~\cite{ben2019demystifying}, the dominant and simplest form is data parallelism.
In data-parallel training, the model is replicated across different compute nodes. After the computation of local gradient on each process is finished, the distributed gradients are accumulated across all processes, usually using an \textit{allreduce}~\cite{chan2007collective} operation.
However, not all dimensions are equally important and the communication of the distributed gradients can be sparsified significantly, introducing up to 99.9\% zero values without significant loss of accuracy. Only the nonzero values of the distributed gradients are accumulated across all processes.
See~\cite{hoefler2021sparsity} for an overview of gradient and other sparsification approaches in deep learning. 

However, sparse reductions suffer from scalability issues. Specifically, the communication volume of the existing sparse reduction algorithms grows with the number of processes $P$. Taking the \textit{allgather}-based sparse reduction~\cite{renggli2019sparcml,wang2020fft,shi2019understanding} as an example, its communication volume is proportional to $P$, which eventually surpasses the dense \textit{allreduce} as $P$ increases. Other more complex algorithms~\cite{renggli2019sparcml} suffer from significant fill-in during the reduction, which also leads to a quick increase of the data volume as $P$ grows, and may degrade to dense representations on the fly.
For example, let us assume the model has 1 million weights and it is 99\% sparse at each node---thus, each node contributes its 10,000 largest gradient values and their indexes to the calculation.
Let us now assume that the computation is distributed across 128 data-parallel nodes and the reduction uses a dissemination algorithm~\cite{hensgen1988two,li2013numa} with 7 stages.
In stage one, each process communicates its 10,000 values to be summed up. 
Each process now enters the next stage with up to 20,000 values. Those again are summed up leading to up to 40,000 values in stage 3 (if the value indexes do not overlap). 
The number of values grows exponentially until the algorithm converges after 7 stages with 640,000 values (nearly dense!).
Even with overlapping indexes, the fill-in will quickly diminish the benefits of gradient sparsity in practice and lead to large and suboptimal communication volumes~\cite{renggli2019sparcml}.

We show how to solve or significantly alleviate the scalability issues for large allreduce operations, leading to an asymptotically optimal O($k$) sparse reduction algorithm.
Our intuitive and effective scheme, called O$k$-Top$k$ is easy to implement and can be extended with several features to improve its performance:
(1) explicit sparsity load balancing can distribute the communication and computation more evenly, leading to higher performance; 
(2) a shifted schedule and bucketing during the reduction phase avoids local hot-spots; 
and (3) an efficient selection scheme for top-$k$ values avoids costly sorting of values leading to a significant speedup.

We implement O$k$-Top$k$ in PyTorch~\cite{paszke2019pytorch} and compare it to four other sparse allreduce approaches.
Specifically, O$k$-Top$k$ enables:

\begin{itemize}
  \item a novel sparse allreduce incurring less than 6$k$ (asymptotically optimal) communication volume which is more scalable than the existing algorithms,
  \item a parallel SGD optimizer using the proposed sparse allreduce with high training speed and convergence guarantee,
  \item an efficient and accurate top-$k$ values prediction by regarding the gradient values (along the time dimension) as a slowly changing stochastic process.
\end{itemize}
  
We study the parallel scalability and the convergence of different neural networks, including image classification (VGG-16~\cite{simonyan2014very} on Cifar-10), speech recognition (LSTM~\cite{hochreiter1997long} on AN4), and natural language processing (BERT~\cite{devlin2018bert} on Wikipedia), on the Piz Daint supercomputer with a Cray Aries HPC network. Compared with the state-of-the-art approaches, O$k$-Top$k$ achieves the fastest time-to-solution (i.e., reaching the target accuracy/score using the shortest time for full training), and significantly improves training throughput (e.g., 3.29x-12.95x improvement for BERT on 256 GPUs).
We expect speedups to be even bigger in cloud environments with commodity networks. The code of O$k$-Top$k$ is available:  \url{https://github.com/Shigangli/Ok-Topk}

\section{Background and Related Work}
\label{sec:bg}

Mini-batch stochastic gradient descent (SGD)~\cite{bottou2018optimization} is the mainstream method to train deep neural networks. Let $b$ be the mini-batch size, $w_t$ the neural network weights at iteration $t$, $(x_i, y_i)$ a sample in a mini-batch, and $\ell$ a loss function. During training, it computes the loss in the forward pass for each sample as $\ell(w_t, x_i, y_i)$, and then a stochastic gradient in the backward pass as
\[ G_t(w_t) = \frac{1}{b} \sum_{i = 0}^b \nabla \ell(w_t, x_i, y_i). \] The model is trained in iterations such that $w_{t+1} = w_t - \alpha G_t(w_t)$, where $\alpha$ is the learning rate.

To scale up the training process to parallel machines, \textit{data parallelism}~\cite{goyal2017accurate, sergeev2018horovod, you2018imagenet, you2019largebert, li2020taming, li2020breaking} is the common method, in which the mini-batch is partitioned among $P$ workers and each worker maintains a copy of the entire model. Gradient accumulation across $P$ workers is often implemented using a standard dense \textit{allreduce}~\cite{chan2007collective}, leading to about 2$n$ communication volume where $n$ is the number of gradient components (equal to the number of model parameters). However, recent deep learning models~\cite{devlin2018bert, real2019regularized, radford2019language, brown2020language} scale rapidly from millions to billions of parameters, and the proportionally increasing overhead of dense allreduce becomes the main bottleneck in data-parallel training.

Gradient sparsification~\cite{aji2017sparse,alistarh2018convergence,cai2018long,renggli2019sparcml,shi2019understanding,shi2019distributed,shi2019convergence,han2020adaptive,fei2021efficient,xu2021deepreduce} is a key approach to lower the communication volume. By top-$k$ selection, i.e., only selecting the largest (in terms of the absolute value) $k$ of $n$ components, the gradient becomes very sparse (commonly around 99\%). Sparse gradients are accumulated across $P$ workers using a sparse allreduce. Then, the accumulated sparse gradient is used in the Stochastic Gradient Descent (SGD) optimizer to update the model parameters, which is called Top$k$ SGD. The convergence of Top$k$ SGD has been theoretically and empirically proved~\cite{alistarh2018convergence,renggli2019sparcml,shi2019understanding}. However, the parallel scalablity of the existing sparse allreduce algorithms is limited, which makes it very difficult to obtain real performance improvement, especially on the machines (e.g., supercomputers) with high-performance interconnected networks~\cite{shanley2003infiniband,alverson2012cray,slingshot,foley2017ultra}.

\begin{table}[h!]
  \caption{Communication overhead of dense and sparse allreduces ($n$ is the number of gradient components and $n\gg k$)}
  \label{tab:summary}
  \centering
  \renewcommand{\arraystretch}{1.2}
  \begin{threeparttable}
  \begin{tabular}{p{2.1cm}p{3.3cm}p{2cm}}
    \toprule
    Algorithms &  Bandwidth & Latency \\
    \midrule
    Dense~\cite{chan2007collective} &  $2n \frac{P-1}{P} \beta$ & $2(\log_{}P)$$\alpha$ \\
    Top$k$A~\cite{renggli2019sparcml,wang2020fft} &  2$k$($P$-1)$ \beta$ & $(\log_{}P)$$\alpha$ \\
    Top$k$DSA~\cite{renggli2019sparcml} &  $[4k\frac{P-1}{P} \beta$, ($2k+n$)$\frac{P-1}{P} \beta ]$\tnote{1} & ($P + 2\log_{}P$)$ \alpha$ \\
    gTop$k$~\cite{shi2019distributed} &  4$k (\log_{}P)$$\beta$ & 2$(\log_{}P)$$\alpha$ \\
    Gaussian$k$~\cite{shi2019understanding} &  2$k$($P$-1)$ \beta$ & $2(\log_{}P)$$\alpha$ \\
    O$k$-Top$k$ &  $[2k \frac{P-1}{P} \beta$, $6k \frac{P-1}{P} \beta]$\tnote{1} & ($2P + 2\log_{}P$)$\alpha$ \\
    \bottomrule
  \end{tabular}
  \begin{tablenotes}
       \footnotesize
       \item[1] Intervals.
    \end{tablenotes}
  \end{threeparttable}
\end{table}

Table~\ref{tab:summary} summarizes the existing dense and sparse allreduce approaches. We assume all sparse approaches use the coordinate (\textit{COO}) format to store the sparse gradient, which consumes 2$k$ storage, i.e., $k$ values plus $k$ indexes. There are other sparse formats (see~\cite{hoefler2021sparsity} for an overview), but format selection for a given sparsity is not the topic of this work. To model the communication overhead, we assume bidirectional and direct point-to-point communication between the compute nodes, and use the classic latency-bandwidth cost model. The cost of sending a message of size $L$ is $\alpha + \beta L$, where $\alpha$ is the latency and $\beta$ is the transfer time per word. 

For \textbf{dense} allreduce, Rabenseifner’s algorithm~\cite{chan2007collective} reaches the lower bound~\cite{thakur2005optimization} on the bandwidth term (i.e., about 2$n$). \textbf{Top$k$A} represents the \textit{A}llgather based approach~\cite{renggli2019sparcml,shi2019distributed}, in which each worker gathers the sparse gradients across $P$ workers, and then the sparse reduction is conducted locally. Although Top$k$A is easy to realize and does not suffer from the fill-in problem, the bandwidth overhead of allgather is proportional to $P$~\cite{thakur2005optimization,chan2007collective} and thus not scalable. \textbf{Top$k$DSA} represents the \textit{D}ynamic \textit{S}parse \textit{A}llreduce used in SparCML~\cite{renggli2019sparcml}, which consists of reduce-scatter and allgather (motivated by Rabenseifner’s algorithm) on the sparse gradients. In the best case of that the indexes of top-$k$ values are fully overlapped across $P$ workers and the top-$k$ values are uniformly distributed in the gradient space, Top$k$DSA only incurs about 4$k$ communication volume. However, 
the best case is almost never encountered in the real world of distributed training, and Top$k$DSA usually suffers from the fill-in problem and switches to a dense allgather before sparsity cannot bring benefit, which incurs about 2$k$+$n$ communication volume. \textbf{gTop$k$}~\cite{shi2019distributed} implements the sparse allreduce using a reduction tree followed by a broadcast tree. To solve the fill-in problem, gTop$k$ hierarchically selects top-$k$ values in each level of the reduction tree, which results in 4$k \log_{}P$ communication volume. \textbf{Gaussian$k$}~\cite{shi2019understanding} uses the same sparse allreduce algorithm as Top$k$A with a further optimization for top-$k$ selection.
For our \textbf{O$k$-Top$k$}, the communication volume is bounded by 6$k$. Although O$k$-Top$k$ has a little higher latency term than the others, we target large-scale models and thus the bandwidth term dominates. Since the bandwidth term of O$k$-Top$k$ is only related to $k$, our algorithm is more efficient and scalable than all others. See Section~\ref{sec:casestudies} for experimental results.

Gradient quantization~\cite{dryden2016communication,alistarh2017qsgd,10.5555/3294771.3294915,horvath2019stochastic,nadiradze2021asynchronous,bernstein2018signsgd}, which reduces the precision and uses a smaller number of bits to represent gradient values, is another technique to reduce the communication volume. Note that this method is orthogonal to gradient sparsification. A combination of sparsification and quantization is studied in SparCML~\cite{renggli2019sparcml}.

Another issue is the sparsification overhead. Although gradient sparsification significantly reduces the local message size, top-$k$ selection on many-core architectures, such as GPU, is not efficient. A native method is to sort all values and then select the top-$k$ components. Asymptotically optimal comparison-based sorting algorithms, such as merge sort and heap sort, have O($n\log_{}n$) complexity, but not friendly to GPU. Bitonic sort is friendly to GPU but requires O($n\log^{2}_{}n$) comparisons. The quickselect~\cite{mahmoud1995analysis} based top-$k$ selection has an average complexity of O($n$) but again not GPU-friendly. Bitonic top-$k$~\cite{shanbhag2018efficient} is a GPU-friendly algorithm with complexity O($n\log^{2}_{}k$), but still not good enough for large $k$. To lower the overhead of sparsification, Gaussian$k$~\cite{shi2019understanding} approximates the gradient values distribution to a Gaussian distribution with the same mean and standard deviation, and then estimates a threshold using the percent-point function and selects the values above the threshold. The top-$k$ selection in Gaussian$k$ is GPU-friendly with complexity O($n$), but it usually underestimates the value of $k$ because of the difference between Gaussian and the real distributions (see Section~\ref{para:topkselect}). Adjusting the threshold adaptively (e.g., lower the threshold for an underestimated $k$)~\cite{shi2019understanding} is difficult to be accurate. In O$k$-Top$k$, we use a different method for the top-$k$ selection. We observe that the distribution of gradient values changes slowly during training. Therefore, we periodically calculate the accurate threshold and reuse it in the following iterations within a period. Empirical results show that this threshold reuse strategy achieves both accuracy (see Section~\ref{sec:tpkpre}) and efficiency (see Section~\ref{sec:casestudies}) when selecting local and global top-$k$ values in O$k$-Top$k$.

\begin{figure}[ht!]
\centering\includegraphics[width=0.9\linewidth]{./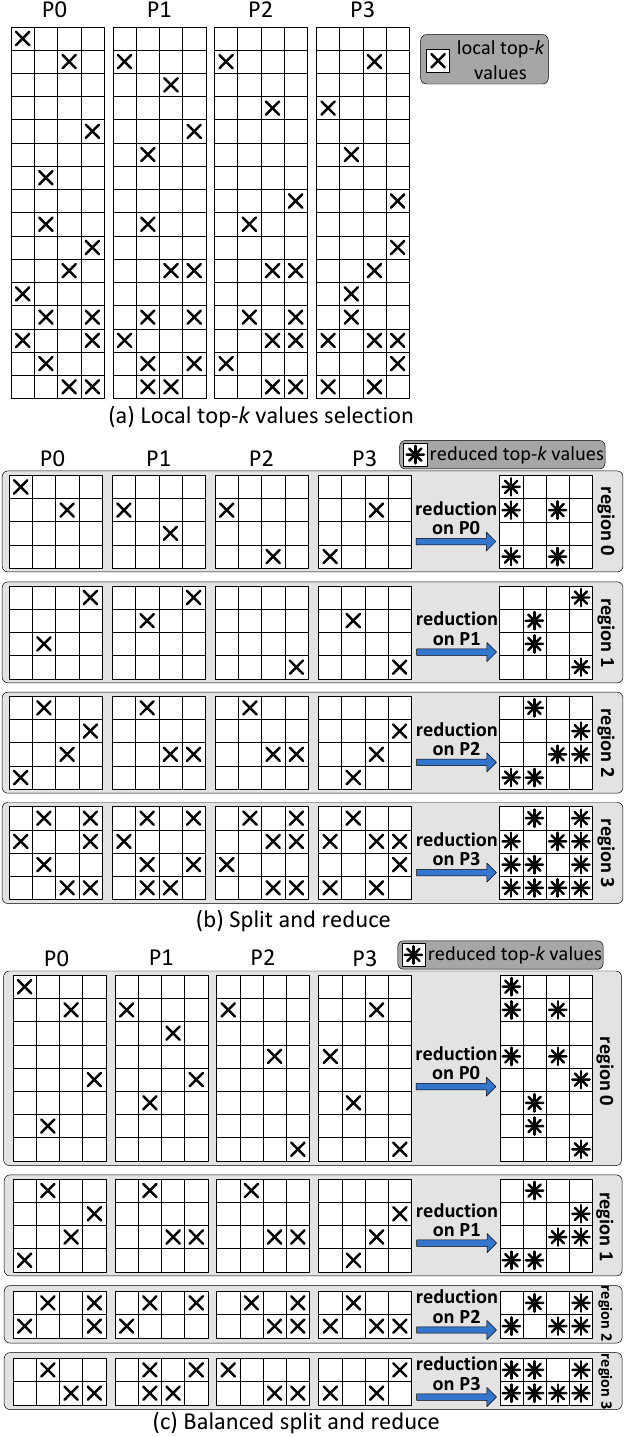}
\caption{\label{splitreduce} Gradient space split with balanced top-$k$ values and reduction on subspace.}
\end{figure}

\section{O($k$) Sparse Allreduce}

In this section, we will present the sparse allreduce algorithm of O$k$-Top$k$, analyze the complexity using the aforementioned latency ($\alpha$) - bandwidth ($\beta$) cost model, and prove its optimality. We use \textit{COO} format to store the sparse gradient. Since the algorithm incurs less than 6$k$ communication volume, we call it \textit{O($k$) sparse allreduce}.

\subsection{Sparse allreduce algorithm}

O($k$) sparse allreduce mainly includes two phases: (1) \textit{split and reduce}, and (2) \textit{balance and allgatherv}. During the two phases, we propose to use an efficient top-$k$ selection strategy to select (what we call) \textit{local} top-$k$ values and \textit{global} top-$k$ values, respectively. Specifically, the semantic of O($k$) sparse allreduce is defined by
$\mbox{Top}k (\sum_{i = 0}^{P-1} \mbox{Top}k (G^i_t))$,
where $G^i_t$ is the sparse gradient on worker $i$ at training iteration $t$, the inner Top$k$ operator is the local top-$k$ selection, and the outer Top$k$ operator is the global top-$k$ selection.

\subsubsection{Split and reduce} 
\label{para:splitreduce}
Figure~\ref{splitreduce} presents the \textit{split and reduce} phase. Suppose we have 4 workers and each worker has a 2D gradient of size 16x4. In Figure~\ref{splitreduce}(a), each worker selects the local top-$k$ values to sparsify the gradient. How to efficiently select the top-$k$ values will be discussed in Section~\ref{para:topkselect}. Then, a straightforward \textit{split and reduce} for the sparse gradients is presented in Figure~\ref{splitreduce}(b), in which the 2D space of the sparse gradient is evenly partitioned into $P$ regions and worker $i$ is responsible for the reduction on region $i$. Each worker $i$ receives sparse regions from the other workers and then conducts the reduction locally. However, this simple partitioning method may lead to severe load imbalance, since the top-$k$ values may not be evenly distributed among the regions. In an extreme case, all local top-$k$ values will be in region 0 of each worker, then worker 0 has to receive a total of $2(P-1)k$ elements (i.e., values and indexes) while the other workers receive zero elements.

Without loss of generality, we can make a more balanced partition (as shown in Figure~\ref{splitreduce}(c)) based on our observations for deep learning tasks: the coordinates distribution of the local top-$k$ values of the gradient is approximately consistent among the workers at the coarse-grained (e.g., region-wise) level, and changes slowly during training. To achieve a balanced partition, each worker calculates the local boundaries of the $P$ regions by balancing the local top-$k$ values. Then, a consensus is made among $P$ workers by globally averaging the $P$-dimensional boundary vectors, which requires an allreduce with message size of $P$ elements. The boundaries are recalculated after every $\tau$ iterations. We empirically set $\tau = 64$ to get a performance benefit from periodic space repartition as shown in Section~\ref{sec:evalbalance}. Note that the small overhead of allreduce (i.e., $(\log_{}P)\alpha$) is amortized by the reuse in the following $\tau$-1 iterations, resulting in only $(\log_{}P)\alpha/\tau$ overhead per iteration. Therefore, the overhead of boundary recalculation can be ignored. After making a balanced split, each worker approximately receives 2$k/P$ elements from any of the other $P-1$ workers. Therefore, the overhead is
\begin{equation} 
C_{split\_and\_reduce} = (P-1) \alpha + 2k \frac{P-1}{P} \beta
\end{equation}

We further make two optimizations for \textit{split and reduce}, including destination rotation and bucketing. As shown in Figure~\ref{splitreduceopt}(a), a native communication pattern is that all workers send data to worker $i$ at step $i$, which may lead to endpoint congestion~\cite{wu2019network}. To avoid these hot-spots, we rotate the destinations of each worker as shown in Figure~\ref{splitreduceopt}(b). Furthermore, to utilize the network parallelism, we bucketize the communications. The messages with in a bucket are sent out simultaneously using non-blocking point-to-point communication functions. Communications in the current bucket can be overlapped with the computation (i.e., local reduction) of the previous bucket.

\begin{figure}[ht!]
\centering\includegraphics[width=0.85\linewidth]{./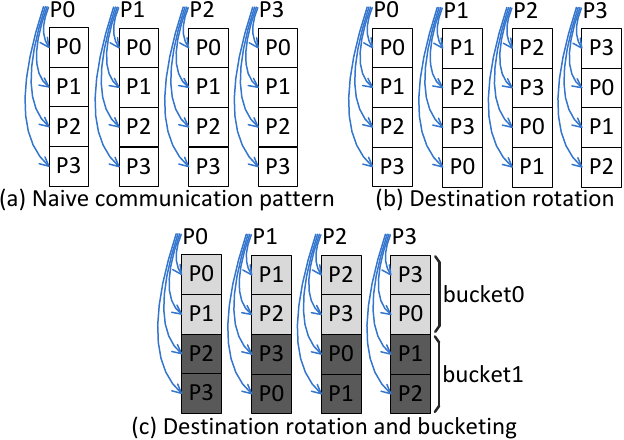}
\caption{\label{splitreduceopt} Top-$k$ values reduction with endpoint congestion avoidance and network parallelism exploitation.}
\end{figure}

\subsubsection{Balance and allgatherv}
\label{para:ballgatherv}
Figure~\ref{allgatherv} presents the phase of \textit{balance and allgatherv}. First, each worker selects the global top-$k$ values from the reduced top-$k$ values in the region that the worker is in charge of. Note that the global top-$k$ selection only happens locally according to an estimated threshold (will be discussed in detail in Section~\ref{para:topkselect}). Next, each worker packages the selected global top-$k$ values (and the corresponding indexes) into a consecutive buffer. Similar to the local top-$k$ values, the global top-$k$ values may also not be evenly distributed among the workers, causing load imbalance. In an extreme case, all global top-$k$ values will be in one worker. The classic recursive doubling based allgatherv~\cite{thakur2005optimization} would incur $2k\log_{}P$ communication volume, namely, total $\log_{}P$ steps with each step causing $2k$ traffic. 

\begin{figure}[ht!]
\centering\includegraphics[width=0.93\linewidth]{./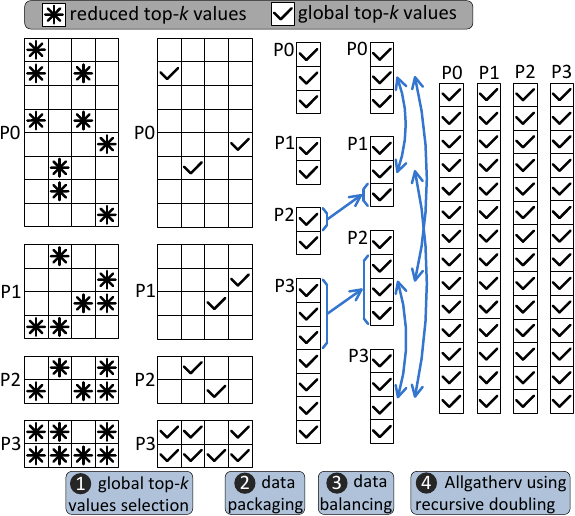}
\caption{\label{allgatherv} Data balancing and allgatherv for global top-$k$ values.}
\end{figure}

To bound the communication volume by 4$k$, we conduct a data balancing step after packaging. Before balancing, an allgather is required to collect the consecutive buffer sizes from $P$ workers, which only incurs $(\log_{}P)\alpha$ overhead (the bandwidth term can be ignored). Then, each worker uses the collected buffer sizes to generate the communication scheme (i.e., which data chunk should be sent to which worker) for data balancing. We use point-to-point communication (blue arrows in the step of \textit{data balancing}) to realize the scheme. The overhead of data balancing is bounded by the extreme case of all global top-$k$ values locate in one worker, where data balancing costs $P \alpha + 2k \frac{P-1}{P}\beta$. Data balancing in any other case has less cost than the extreme case. At last, an allgatherv using recursive doubling on the balanced data has the overhead of $(\log_{}P) \alpha + 2k \frac{P-1}{P}\beta$. Therefore, the overhead of \textit{balance and allgatherv} is

\begin{equation} 
C_{balance\_and\_allgatherv} \leq (P + 2\log_{}P) \alpha + 4k \frac{P-1}{P} \beta .
\end{equation}

By adding the costs of the two phases, the total overhead of O($k$) sparse allreduce is

\begin{equation}
\label{ok:cost}
C_{Ok\_sparse\_allreduce} \leq (2P + 2\log_{}P) \alpha + 6k \frac{P-1}{P} \beta
\end{equation}

\subsubsection{Efficient selection for top-$k$ values}
\label{para:topkselect}

O$k$-Top$k$ relies on estimated thresholds to approximately select the local and global top-$k$ values. The key idea is to regard the gradient values (along the time dimension) as a slowly changing stochastic process $G(t)$. Specifically, the statistics (such as top-$k$ thresholds) of $G(t), G(t+1), ... G(t+\tau^{\prime}-1)$ change very slowly. Therefore, we only calculate the accurate thresholds for local and global top-$k$ values after every $\tau^{\prime}$ iterations, and then reuse the thresholds in the following $\tau^{\prime}$-1 iterations. The accurate threshold can be obtained by sorting the gradient values and returning the $k$-th largest value. Top-$k$ selection according to a threshold only requires $n$ comparisons and is quite efficient on GPU. The overhead of accurate threshold calculation is amortized by the reuse.

\begin{figure}[!ht]
  \centering
  \begin{subfigure}{\linewidth}
    \centering
    \includegraphics[width=.9\linewidth]{./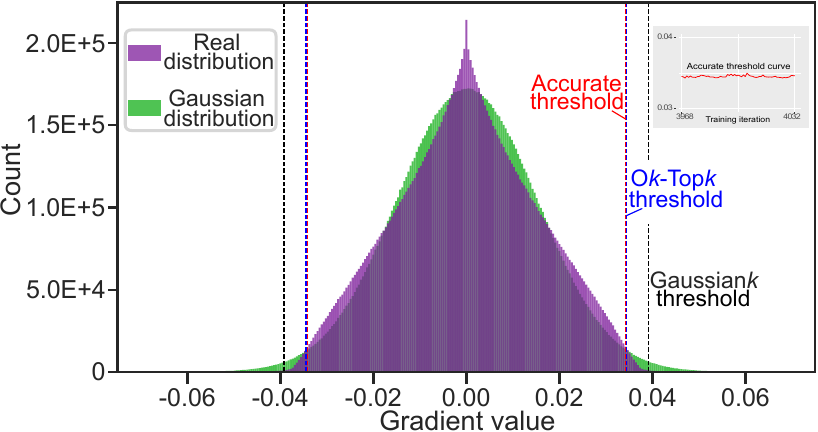}
    \caption{VGG-16~\cite{simonyan2014very} on Cifar-10 with density = 1.0\%.}
    \label{vgggraddist}
  \end{subfigure}
  
  \begin{subfigure}{\linewidth}
    \centering
    \includegraphics[width=.9\linewidth]{./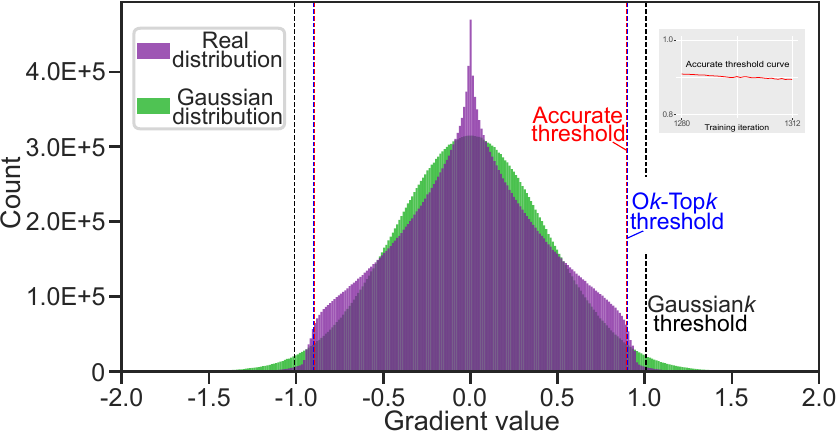}
    \caption{LSTM~\cite{hochreiter1997long} on AN4~\cite{acero1990environmental} with density = 2.0\%.}
    \label{lstmgraddist}
  \end{subfigure}
  
  \begin{subfigure}{\linewidth}
    \centering
    \includegraphics[width=.9\linewidth]{./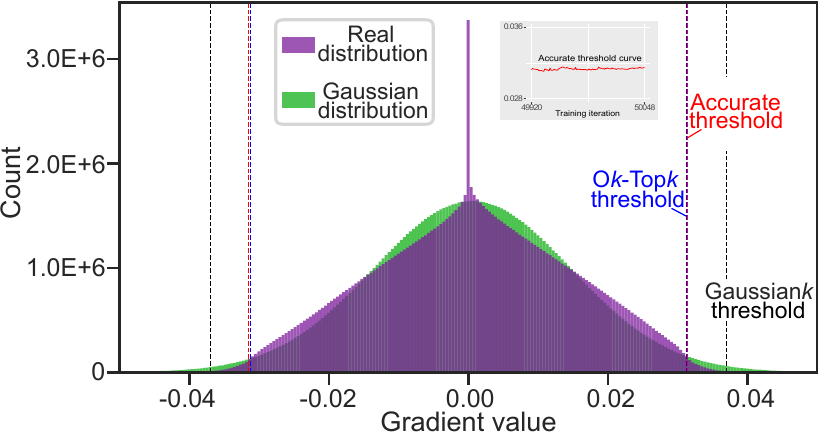}
    \caption{BERT~\cite{devlin2018bert} on Wikipedia~\cite{devlin2018bert} with density = 1.0\%.}
    \label{bertgraddist}
  \end{subfigure}
  \caption{Gradient value distribution and local top-$k$ threshold predictions. Density is defined as $k/n$.}
  \label{fig:graddist}
\end{figure}

We validate our claim by the empirical results from different deep learning tasks presented in Figure~\ref{fig:graddist}. The gradient value distribution shown in Figure~\ref{fig:graddist} is for a selected iteration where O$k$-Top$k$ uses a threshold calculated more than 25 iterations ago. We can see the threshold of O$k$-Top$k$ is still very close to the accurate threshold (see Section~\ref{sec:tpkpre} for the accuracy verification for top-$k$ selections of O$k$-Top$k$ in the scenario of full training). On the contrary, Gaussian$k$ severely underestimates the value of $k$ by predicting a larger threshold, especially after the first few training epochs. This is because as the training progresses, the gradient values are getting closer to zero. Gaussian distribution, with the same mean and standard deviation as the real distribution, usually has a longer tail than the real distribution (see Figure~\ref{fig:graddist}).

\begin{algorithm}[h!]
  \begin{algorithmic}[1]
    \State \textbf{Inputs:} stochastic gradient $G^i_t$ at worker $i$, iteration $t$ ($t$>0), value $k$, space repartition period $\tau$, thresholds re-evaluation period $\tau^{\prime}$.
    \If{($t$-1)$\mod\tau^{\prime} == 0$}
      \State $local$\_$th$ = \textcolor{blue}{$th$\_$re$\_$evaluate$}($G^i_t$, $k$) 
    \EndIf
    \If{($t$-1)$\mod\tau == 0$}
      \State $boundaries$ = \textcolor{blue}{$space\_repartition$}($G^i_t$, $local\_th$)
    \EndIf
    \State $reduced\_topk^{i}$, $local\_topk\_indexes$ =$\quad \quad \quad \quad \quad \quad \quad $ \textcolor{blue}{\bm{$split\_and\_reduce$}}($G^i_t$, $local\_th$, $boundaries$)
    
    \If{($t$-1)$\mod\tau^{\prime} == 0$}
      \State $all\_reduced\_topk$ = \textcolor{blue}{$allgatherv$}($reduced\_topk^{i}$)
      \State $global$\_$th$ = \textcolor{blue}{$th$\_$re$\_$evaluate$}($all\_reduced\_topk$, $k$) 
    \EndIf
    \State $u_{t}$, $global\_topk\_indexes$ = $\quad \quad \quad \quad \quad \quad \quad \quad \quad \quad $ \textcolor{blue}{\bm{$balance\_and\_allgatherv$}}($reduced\_topk^{i}$, $global$\_$th$)
    \State $indexes$ = $local\_topk\_indexes \cap global\_topk\_indexes$
    \State \Return $u_{t}$, $indexes$
  \end{algorithmic}
  \caption{O($k$) sparse allreduce}
  \label{alg:sparseallreduce}
\end{algorithm}

\subsubsection{Pseudocode of O($k$) sparse allreduce} We present the pseudocode of O($k$) sparse allreduce in Algorithm~\ref{alg:sparseallreduce}. In Lines 2-4, the local top-$k$ threshold is re-evaluated after every $\tau^{\prime}$ iterations. In Lines 5-7, the region boundaries are re-evaluated after every $\tau$ iterations. \textit{Split and reduce} is conducted in Line 8, which returns the reduced local top-$k$ values in region $i$ and the indexes of local top-$k$ values. In Lines 9-12, the global top-$k$ threshold is re-evaluated after every $\tau^{\prime}$ iterations. \textit{Balance and allgatherv} is conducted in Line 13, which returns a sparse tensor $u_t$ with global top-$k$ values and the indexes of global top-$k$ values. Line 14 calculates the intersection of the indexes of local top-$k$ values and the indexes of global top-$k$ values. This intersection (will be used in O$k$-Top$k$ SGD in Section~\ref{sec:oktopksgd}) covers the indexes of local values which eventually contribute to the global top-$k$ values.

\subsection{Lower bound for communication volume}

\begin{theorem}
For sparse gradients stored in COO format, O($k$) sparse allreduce incurs at least $2k \frac{P-1}{P}$ communication volume.
\label{thm:lowerb}
\end{theorem}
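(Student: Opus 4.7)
The plan is to reduce the lower bound to a standard allgather-style information-theoretic argument. By definition, the output of O($k$) sparse allreduce on every worker is a sparse vector carrying exactly $k$ nonzero values together with their $k$ indices, i.e., $2k$ COO words in total, and every worker must hold this same $2k$-word output when the algorithm terminates. The goal is to show that at least $2k\frac{P-1}{P}$ of these words must arrive at each worker via the network.

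First, I would construct an adversarial input that spreads the global output uniformly across the $P$ workers. Concretely, for each worker $i$ I pick a private set of $k/P$ distinct indices (disjoint across workers) and place a ``dominant'' gradient value there whose magnitude exceeds anything contributed by other workers; the remaining entries are filled with small random noise drawn from a broad distribution. By construction, every local top-$k$ contains the $k/P$ dominant entries of its owning worker, the sum across workers preserves these as the $k$ globally largest magnitudes, and so the global top-$k$ is exactly the union of the $P$ private contributions. Thus each worker is the unique source of precisely $2k/P$ of the $2k$ output words.

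Next, I would apply the allgather lower bound to this hard instance. Before any communication takes place, a given worker can compute at most the $2k/P$ output words corresponding to its own private dominant entries; the remaining $2k\frac{P-1}{P}$ output words depend solely on inputs located at the other workers and, by the construction, carry $2k\frac{P-1}{P}$ words of fresh information (each dominant index is uniform on a large universe, each dominant value uniform on a large range). Hence every worker must receive at least $2k\frac{P-1}{P}$ words of data along the network, which is the claimed per-worker bound.

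The main obstacle is ruling out encoding tricks: a clever protocol might try to transmit less by predicting foreign values from side information, or by packing indices more tightly than COO allows. The first possibility is defeated by drawing the dominant values and their private index sets from sufficiently rich distributions (e.g., uniform over a universe of size much larger than $k$), which forces each missing output word to carry $\Omega(1)$ word of irrecoverable entropy that no local computation can supply. The second is precluded by the theorem statement itself, which fixes the COO representation and therefore charges $2k$ words per output regardless of how sparse the indices happen to be in the ambient gradient space. Together these two observations upgrade the informal counting argument into a rigorous per-worker lower bound of $2k\frac{P-1}{P}$.
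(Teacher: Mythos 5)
Your proposal is correct and rests on the same underlying counting fact as the paper's proof: each worker must end up holding all $k$ (value, index) pairs of the output, at most $k$ of these pairs in total can originate locally across all $P$ workers, and therefore each worker must import $k\frac{P-1}{P}$ values plus their indices. The routes differ in presentation. The paper argues by contradiction and averaging: if every worker received fewer than $k\frac{P-1}{P}$ values, each would locally source more than $k/P$ of the global top-$k$, and summing over workers would yield more than $k$ output values. This is two lines, but it implicitly assumes the worst case in which no output value is locally computable at more than one worker (for degenerate inputs, e.g.\ identical gradients on all workers, the counting would double-count and the contradiction would not follow). Your proof fills exactly that gap by exhibiting an explicit adversarial instance --- disjoint private index sets of size $k/P$ with dominant values --- on which each worker is the unique source of $2k/P$ output words, and then invoking an entropy argument to rule out clever encodings. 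What your approach buys is rigor: a worst-case lower bound genuinely requires a hard instance, and the information-theoretic step (fresh entropy per missing word) is the standard way to make ``must receive $X$ words'' precise. What it costs is length and one loose end you should tidy: in your construction the non-dominant entries of each local top-$k$ (there are $k - k/P$ of them) may land on other workers' private indices and contribute to the reduced sums there, so a worker is not quite the \emph{unique} determiner of its own $2k/P$ output words; this only strengthens the lower bound (the worker then needs even more foreign information), but the claim as stated should be weakened to ``can compute \emph{at most} $2k/P$ output words locally,'' which is in fact the phrasing you use in the key step.
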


\begin{proof}
For O($k$) sparse allreduce, each worker eventually obtains the global top-$k$ values. Assume that all workers receive less than $k \frac{P-1}{P}$ values from the others, which means that each worker already has more than $\frac{k}{P}$ of the global top-$k$ values locally. By adding up the number of global top-$k$ values in each worker, we obtain more than $k$ global top-$k$ values, which is impossible. Therefore, each worker has to receive at least $k \frac{P-1}{P}$ values. Considering the corresponding $k$ indexes, the lower bound is $2k \frac{P-1}{P}$.

\end{proof}

The lower bound in Theorem~\ref{thm:lowerb} is achieved by O($k$) sparse allreduce in the following special case: All local top-$k$ values of worker $i$ are in region $i$ that worker $i$ is in charge of, so that the reduction across $P$ workers is no longer required. Furthermore, the global top-$k$ values are uniformly distributed among $P$ workers, namely each worker has exactly $\frac{k}{P}$ of the global top-$k$ values. Then, an allgather to obtain the global top-$k$ values (plus $k$ indexes) incurs $2k \frac{P-1}{P}$ communication volume. Therefore, the lower bound is tight. Since O($k$) sparse allreduce incurs at most $6k \frac{P-1}{P}$ communication volume (see Equation~\ref{ok:cost}), it is asymptotically optimal.


\begin{algorithm}[h!]
  \begin{algorithmic}[1]
    \State \textbf{Inputs:} stochastic gradient $G^i$($\cdot$) at worker $i$, value $k$, learning rate $\alpha$.
    \State Initialize $\epsilon_{0}^{i}$ = 0, $G_{0}^i$ = 0
    \For{$t = 1$ \textbf{to} $T$}
      \State $acc_{t}^{i}$ = $\epsilon_{t-1}^{i} + \alpha G^i_{t-1}$($w_{t-1}$) \Comment{Accumulate residuals}
      \State $u_{t}$, $indexes$ = \textcolor{blue}{\bm{$Ok\_sparse\_allreduce$}}($acc_{t}^{i}$, $t$, $k$)
      \State $\epsilon_{t}^{i} = acc_{t}^{i} - acc_{t}^{i}$($indexes$) \Comment{Update residuals}
      \State $w_{t} = w_{t-1} - \frac{1}{P}u_{t}$ \Comment{Apply the model update}
    \EndFor
  \end{algorithmic}
  \caption{O$k$-Top$k$ SGD}
  \label{alg:oktopksgd}
\end{algorithm}

\section{O$k$-Top$k$ SGD Algorithm}
\label{sec:oktopksgd}
We discuss how O($k$) sparse allreduce works with the SGD optimizer for distributed training in this section. An algorithmic description of  O$k$-Top$k$ SGD in presented in Algorithm~\ref{alg:oktopksgd}. The key point of the algorithm is to accumulate the residuals (i.e., the gradient values not contributing to the global top-$k$ values) locally, which may be chosen by the top-$k$ selection in the future training iterations to make contribution. Empirical results of existing work~\cite{renggli2019sparcml,shi2019distributed,alistarh2018convergence} show that residual accumulation benefits the convergence. We use $\epsilon_t^i$ to represent the residuals maintained by worker $i$ at iteration $t$. In Line 4 of Algorithm~\ref{alg:oktopksgd}, residuals are accumulated with the newly generated gradient to obtain the accumulator $acc_t^i$. Then, in Line 5, $acc_t^i$ is passed to O($k$) sparse allreduce (presented in Algorithm~\ref{alg:sparseallreduce}), which returns the sparse tensor $u_t$ containing global top-$k$ values and the $indexes$ 
marking the values in $acc_t^i$ which contribute to $u_t$. In Line 6, residuals are updated by setting the values in $acc_t^i$ marked by $indexes$ to zero. In Line 7, $u_t$ is applied to update the model parameters.

\subsection{Convergence proof}
Unless otherwise stated, $\|\cdot\|$ denotes the 2-norm.

\begin{theorem}
Consider the O$k$-Top$k$ SGD algorithm when minimizing a smooth, non-convex objective function $f$. Then there exists a learning rate schedule ($\alpha_{t}$)$_{t=1,T}$ such that the following holds:

\begin{equation} 
\min_{t \in \{ 1,...,T\}} \E[\|\nabla f(w_{t})\|^2] \overset{T \to \infty}{\to} 0
\end{equation}
\label{thm:converge}
\end{theorem}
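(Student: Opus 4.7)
The plan is to follow the standard error-feedback SGD template (as in Alistarh et al.\ and Stich--Karimireddy), specialized to the two-stage compressor of O$k$-Top$k$. I would work under the usual non-convex assumptions: $f$ is $L$-smooth, and each per-worker stochastic gradient $G^i_t$ is unbiased ($\E[G^i_t(w)]=\nabla f(w)$) with bounded second moment $\E\norm{G^i_t(w)}^2 \leq G^2$.

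The first step is to expose the algebraic invariant that makes error feedback work here: the way $indexes$ is constructed in Line 14 of Algorithm~\ref{alg:sparseallreduce}, together with the residual update in Line 6 of Algorithm~\ref{alg:oktopksgd}, gives the pointwise identity $u_t + \sum_i \epsilon_t^i = \sum_i acc_t^i$. With this in hand I define the virtual iterate $\tilde{w}_t := w_t - \frac{1}{P}\sum_i \epsilon_t^i$, and a one-line calculation yields
\begin{equation*}
\tilde{w}_{t+1} = \tilde{w}_t - \alpha \bar{G}_t(w_t),\qquad \bar{G}_t(w_t) := \frac{1}{P}\sum_i G^i_t(w_t),
\end{equation*}
so the virtual sequence performs vanilla synchronous SGD, only with the stochastic gradient evaluated at $w_t$ rather than $\tilde{w}_t$. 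Applying $L$-smoothness to $f(\tilde{w}_{t+1})$, taking expectation (using unbiasedness of $\bar{G}_t$), and lower-bounding $\langle\nabla f(\tilde{w}_t),\nabla f(w_t)\rangle$ by $\frac{1}{2}\norm{\nabla f(w_t)}^2 - \frac{L^2}{2}\norm{\tilde{w}_t-w_t}^2$ via polarization plus smoothness, I obtain a per-step inequality whose only non-standard term is $\E\norm{\tilde{w}_t-w_t}^2 = \frac{1}{P^2}\E\norm{\sum_i \epsilon_t^i}^2$.

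The heart of the proof is therefore a uniform bound $\E\norm{\sum_i\epsilon_t^i}^2 = O(\alpha^2 P^2 G^2)$. Using the invariant, this reduces to a contraction of the form $\norm{\sum_i acc_t^i - u_t}^2 \leq (1-\delta)\norm{\sum_i acc_t^i}^2$ for some fixed $\delta>0$. This is the main obstacle of the proof: $u_t$ is not the exact top-$k$ of $\sum_i acc_t^i$, because its support is restricted to coordinates selected by at least one local top-$k$, so the classical $\delta = k/n$ top-$k$ compressor bound does not transfer directly. I would argue the contraction in two stages --- first bound the energy discarded per worker by the local top-$k$ selection (standard $1-k/n$ bound), then bound the energy discarded by the global top-$k$ on the aggregated support (of size at most $Pk$) --- and stitch the two via Young's inequality. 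Even a pessimistic combined constant $\delta = \Omega(1/P)$ is enough for the asymptotic conclusion. Plugging the resulting contraction into a Stich--Karimireddy-style perturbed-iterate recursion $R_{t+1} \leq (1-\delta/2)R_t + O(\alpha^2 P^2 G^2)$ gives the required uniform residual bound.

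Finally, summing the per-step inequality from $t=1$ to $T$, telescoping $f(\tilde{w}_t)$ against $f^\star$, and substituting the residual bound yields an estimate
\begin{equation*}
\frac{1}{T}\sum_{t=1}^T \E\norm{\nabla f(w_t)}^2 = O\!\left(\frac{f(\tilde{w}_0)-f^\star}{\alpha T} + \alpha L G^2 + \alpha^2 L^2 G^2 / \delta\right),
\end{equation*}
and choosing the schedule $\alpha_t = \Theta(1/\sqrt{T})$ drives the right-hand side to zero as $T \to \infty$. Since $\min_t \E\norm{\nabla f(w_t)}^2$ is no larger than the time-average, the theorem follows. As noted, the only nontrivial ingredient is the two-stage contraction bound for the O$k$-Top$k$ operator; the virtual-iterate and smoothness steps are routine once the invariant $u_t+\sum_i\epsilon_t^i=\sum_i acc_t^i$ has been verified.
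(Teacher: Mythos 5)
Your setup is sound up to a point: the invariant $u_t + \sum_i \epsilon_t^i = \sum_i acc_t^i$ does hold (for each global top-$k$ coordinate $j$, $\sum_i acc_t^i(indexes)[j]$ sums exactly the local top-$k$ contributions that form $u_t[j]$, and both sides vanish off the global top-$k$ support), and the virtual-iterate reduction to perturbed SGD is routine from there. But the step you yourself identify as the heart of the argument --- the contraction $\|\sum_i acc_t^i - u_t\|^2 \le (1-\delta)\|\sum_i acc_t^i\|^2$ for a universal $\delta>0$ --- is false without further assumptions, and the proposed two-stage stitching cannot rescue it. The support of $u_t$ is confined to the union of the local top-$k$ supports, and the energy of $\sum_i acc_t^i$ can live entirely outside that union: take coordinates where each $acc_t^i$ is individually small (hence never locally selected) but the values add coherently across workers, while each worker's locally largest entries sit on worker-specific coordinates that cancel in the sum; then $u_t$ captures an arbitrarily small fraction of $\|\sum_i acc_t^i\|^2$ and $\delta$ degenerates to $0$. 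Moreover, the per-worker $1-k/n$ bound controls $\sum_i\|acc_t^i - \mbox{Top}k(acc_t^i)\|^2$, which Young's inequality relates to $\sum_i\|acc_t^i\|^2$, not to $\|\sum_i acc_t^i\|^2$; these differ by up to a factor of $P$ (or arbitrarily under cancellation), so no contraction relative to the aggregated accumulator follows from that route.

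This is precisely the obstacle the paper does not attempt to overcome analytically. Its proof is conditional: it introduces Assumption~1, postulating a constant $\xi$ such that the two-level operator $\mbox{Top}k(\frac{1}{P}\sum_i \mbox{Top}k(\cdot))$ differs from the single-level $\mbox{Top}k(\frac{1}{P}\sum_i(\cdot))$ by at most $\xi\|\alpha G_t(w_t)\|$, validates that assumption only empirically (Section~5.1), and then defers to the existing non-convex Top$k$ SGD analysis of Alistarh et al., for which the single-level top-$k$ does enjoy the classical $1-k/n$ contraction. Your outline is therefore a genuinely different and more ambitious route --- an unconditional error-feedback proof --- but it cannot be completed as stated: to close the gap you would either have to import an assumption equivalent to the paper's Assumption~1 (at which point the paper's shorter argument suffices) or add a structural hypothesis guaranteeing that the union of local top-$k$ supports retains a constant fraction of the aggregated energy.
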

 
\begin{proof}

The update to $w_{t+1}$ in O$k$-Top$k$ SGD is
\[
\mbox{Top}k \biggl( \frac{1}{P}  \sum_{i = 0}^{P-1} \mbox{Top}k \biggl(\alpha G^i_t( w_t ) + \epsilon^i_t \biggr)\biggr),
\]
while top-$k$ components of the sum of updates across all $P$ workers, i.e., the true global top-$k$  values intended to be applied, is

\[
\mbox{Top}k \biggl( \frac{1}{P} \sum_{i = 0}^{P-1} \biggl(\alpha G^i_t( w_t ) + \epsilon^i_t \biggr) \biggr).
\]

We assume the difference between the update calculated by O$k$-Top$k$ SGD and the true global top-$k$ values is bounded by the norm of the true gradient $G_{t}(w_{t}) = \frac{1}{P}  \sum_{i = 0}^{P-1} G^i_t( w_t )$. Then, we make the following assumption:

\begin{assumption}
There exists a (small) constant $\xi$ such that, for every iteration $t \geq$ 0, we have:
\label{ass:xi}
\end{assumption}

\begin{equation}    
\begin{split}
& \Biggl\| \mbox{Top}k \biggl( \frac{1}{P} \sum_{i = 0}^{P-1} \biggl(\alpha G^i_t( w_t ) + \epsilon^i_t \biggr) \biggr) \\ 
& - \mbox{Top}k \biggl( \frac{1}{P}  \sum_{i = 0}^{P-1} \mbox{Top}k \biggl(\alpha G^i_t( w_t ) + \epsilon^i_t \biggr)\biggr) \Biggr\| \leq \xi\| \alpha G_{t}(w_{t})\|
\end{split} 
\end{equation}

We validate Assumption~\ref{ass:xi} by the empirical results on different deep learning tasks in Section~\ref{sec:xi}. Then, we utilize the convergence proof process for Top$k$ SGD in the non-convex case, presented in the work of ~\cite{alistarh2018convergence}, to prove Theorem~\ref{thm:converge}.

\end{proof}

Regarding Theorem~\ref{thm:converge}, we have the following discussion. First, since we analyze non-convex objectives, a weaker notion of convergence than in the convex case (where we can prove convergence to a global minimum) is settled. Specifically, for a given sequence of learning rates, we prove that the algorithm will converge to a stationary point of negligible gradient. Second, like most theoretical results, it does not provide a precise set for the hyperparameters, except the indication of diminishing learning rates.

\section{Evaluations}
\label{sec:eval}

We conduct our experiments on the CSCS Piz Daint supercomputer. Each Cray XC50 compute node contains an Intel Xeon E5-2690 CPU, and one NVIDIA P100 GPU with 16 GB global memory. We utilize the GPU for acceleration in all following
experiments. The compute nodes of Piz Daint are connected by a Cray Aries interconnect network in a Dragonfly topology.

 \begin{table}[!ht]
	\caption{Neural networks used for evaluation.}
	\label{tab:networks}
	\centering
	\small
	\begin{tabular}{p{2.65cm}p{1.57cm}p{1.35cm}p{1.6cm}}
		\toprule
		Tasks & Models & Parameters & Dataset \tabularnewline
		\midrule
		Image classification & VGG-16~\cite{simonyan2014very} & 14,728,266 & Cifar-10 \tabularnewline
		Speech recognition & LSTM~\cite{hochreiter1997long} & 27,569,568 & AN4~\cite{acero1990environmental} \tabularnewline
		Language processing & BERT~\cite{devlin2018bert} & 133,547,324 & Wikipedia~\cite{devlin2018bert} \tabularnewline
		\bottomrule
	\end{tabular}
\end{table}

We use three neural networks from different deep learning domains summarized in Table~\ref{tab:networks} for evaluation. For VGG-16, we use SGD optimizer with initial learning rate of 0.1; for LSTM, we use SGD optimizer with initial learning rate of 1e-3; for BERT, we use Adam~\cite{kingma2014adam} optimizer with initial learning rate of 2e-4, $\beta_{1} = $0.9, $\beta_{2} = $0.999, weight decay of 0.01, and linear decay of the learning rate. For BERT, sparse allreduce is conducted on the gradients and Adam optimizer is applied afterwards. We compare the performance of O$k$-Top$k$ with the parallel SGD schemes using the dense and sparse allreduce algorithms listed in Table~\ref{tab:summary}, which covers the state-of-the-art. For a fair comparison, all schemes are implemented in PyTorch~\cite{paszke2019pytorch} with \textit{mpi4py} as the communication library, which is built against Cray-MPICH 7.7.16. Commonly, the gradients of network layers locate in non-contiguous buffers. We use \textbf{Dense} to denote a single dense allreduce on a long message aggregated from the gradients of all neural network layers. Furthermore, we use \textbf{DenseOvlp} to denote dense allreduce with the optimization of communication and computation overlap. For DenseOvlp, the gradients are grouped into buckets and the message aggregation is conducted within each bucket; once the aggregated message in a bucket is ready, a dense allreduce is fired. The sparse allreduce counterparts (i.e., \textbf{Top$k$A}, \textbf{Top$k$DSA}, \textbf{gTop$k$}, and \textbf{Gaussian$k$}) are already discussed in Section~\ref{sec:bg}. In all following experiments, we define $density$ as $k/n$, where $n$ is the number of components in the gradient.

We utilize the $topk$ function provided by PyTorch~\cite{paszke2019pytorch}, which is accelerated on GPU, to realize the top-$k$ selection in Top$k$A, Top$k$DSA, and gTop$k$, as well as the periodic threshold re-evaluation in O$k$-Top$k$. 

\begin{figure}[!ht]
  \centering
  \begin{subfigure}{\linewidth}
    \centering
    \includegraphics[width=.9\linewidth]{./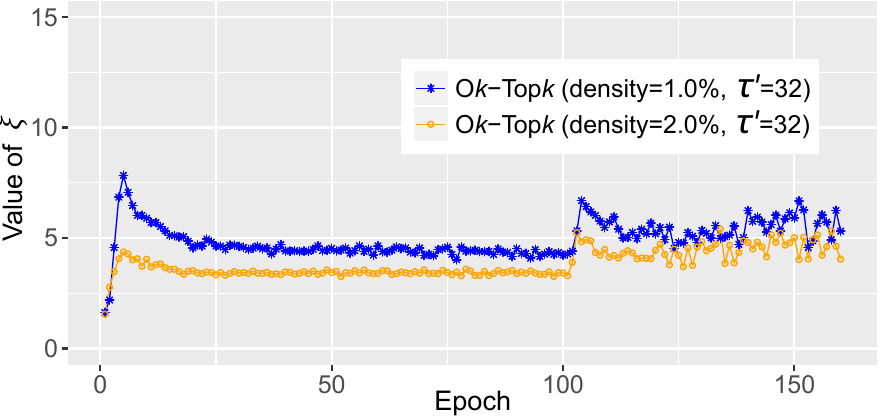}
    \caption{VGG-16 on Cifar-10 running on 16 GPU nodes.}
    \label{vggxi}
  \end{subfigure}
  \begin{subfigure}{\linewidth}
    \centering
  \includegraphics[width=.9\linewidth]{./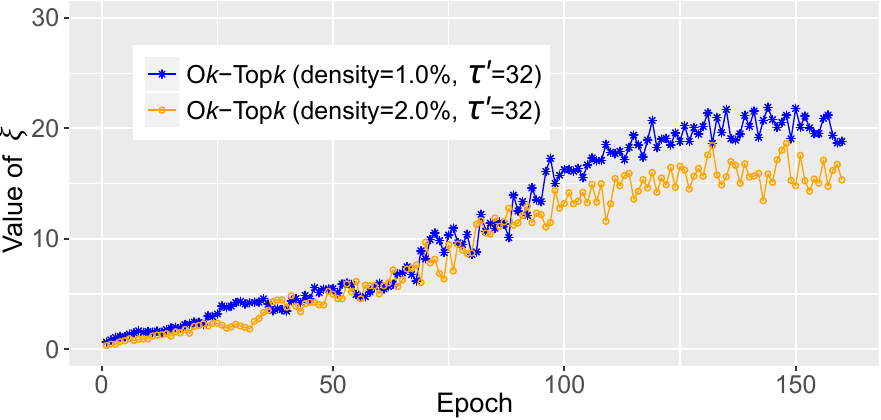}
      \caption{LSTM on AN4 running on 32 GPU nodes.}\label{lstmxi}
  \end{subfigure}
  \begin{subfigure}{\linewidth}
    \centering
  \includegraphics[width=.9\linewidth]{./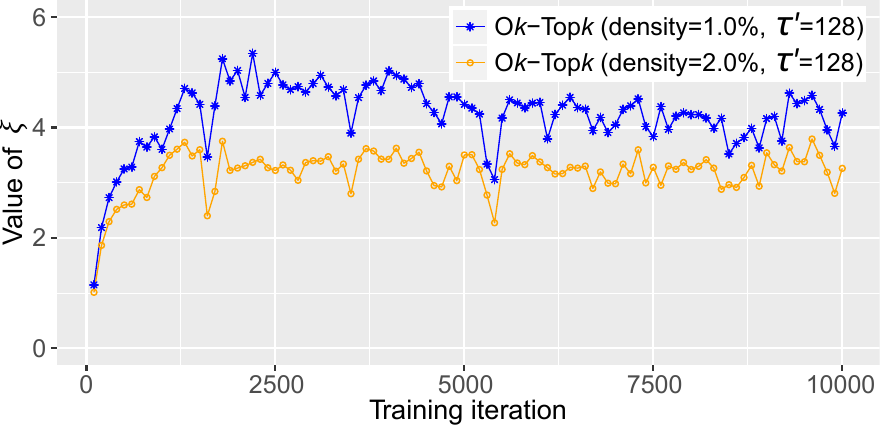}
      \caption{BERT on Wikipedia on 32 GPU nodes.}\label{bertxi}
  \end{subfigure}
  \caption{The empirical value of $\xi$.}
  \label{xivalue}
\end{figure}

\begin{figure}[!ht]

  \begin{subfigure}{\linewidth}
    \centering
    \includegraphics[width=.9\linewidth]{./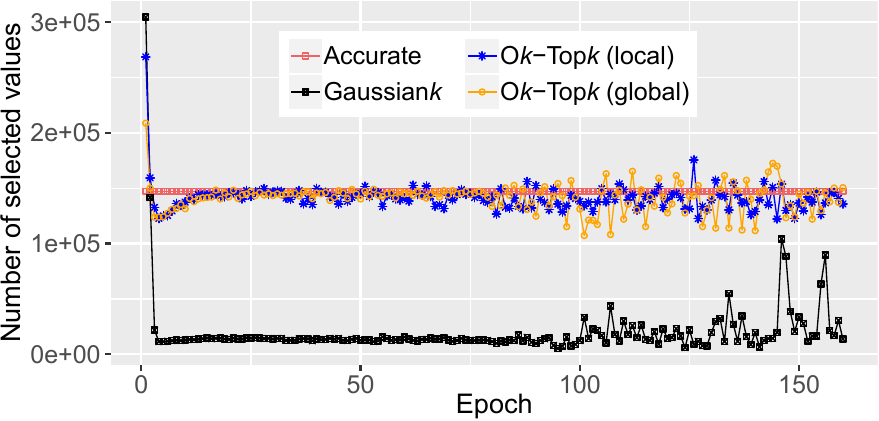}
      \caption{VGG-16 on Cifar-10 on 16 GPUs (density=1.0\%, $\tau^\prime$=32).}\label{vggtopknumber}
  \end{subfigure}
  
  \begin{subfigure}{\linewidth}
    \centering
    \includegraphics[width=.9\linewidth]{./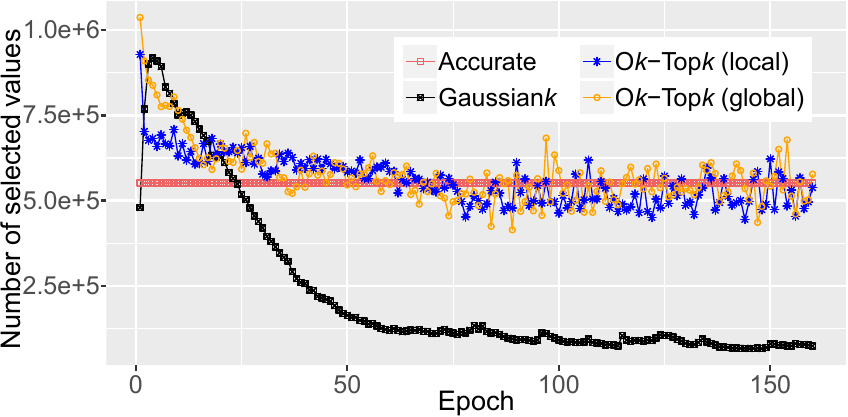}
      \caption{LSTM on AN4 on 32 GPUs (density=2.0\%, $\tau^\prime$=32).}\label{lstmtopknumber}
  \end{subfigure}
  
  \begin{subfigure}{\linewidth}
    \centering
    \includegraphics[width=.9\linewidth]{./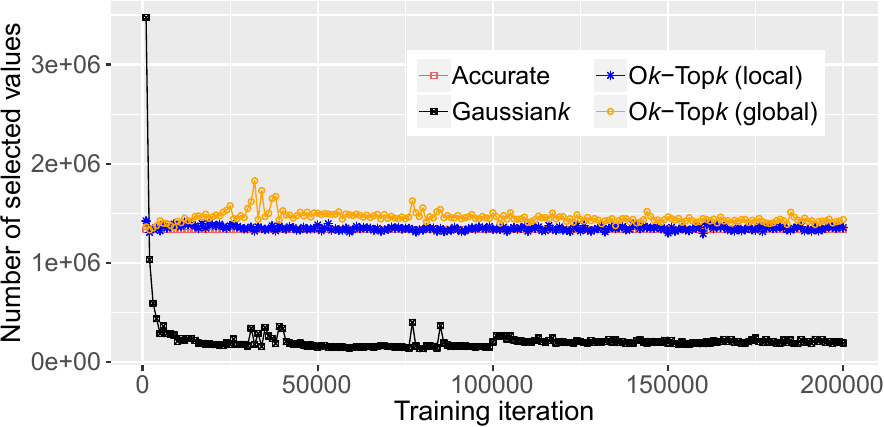}
      \caption{BERT on Wikipedia on 32 GPUs (density=1.0\%, $\tau^\prime$=128).}\label{berttopknumber}
  \end{subfigure}
    \caption{Selections for local and global top-$k$ values}
  \label{fig:topknumbers}
\end{figure}

\subsection{Evaluate the empirical value of $\xi$}
\label{sec:xi}

To validate Assumption~\ref{ass:xi}, we present the empirical values of $\xi$ when training two models until convergence with different densities in Figure~\ref{xivalue}. For VGG-16 and BERT, the value of $\xi$ increases quickly in the first few epochs or training iterations, and then turns to be stable. For LSTM, the value of $\xi$ gradually increases at the beginning and tends to plateau in the second half of the training. For all three models, the value of $\xi$ with a higher density is generally smaller than that with a lower density, especially in the stable intervals. This can be explained trivially by the reason that, the higher the density the higher probability that the results of sparse and dense allreduces get closer. 

As shown in Equation 14 of ~\cite{alistarh2018convergence}, the effect of $\xi$ is dampened by both $P$ and small (i.e., less than 1) learning rates. If $\xi$<$P$ (satisfied by all three models in Figure~\ref{xivalue}) or not too larger than $P$, we consider it has no significant effect on the convergence. Although $\xi$ slightly grows in Figure~\ref{lstmxi}, which is caused by the decreasing of the true gradient norm as the model converges, a small learning rate (e.g., 0.001) can restrain its effect. Overall, Assumption~\ref{ass:xi} empirically holds with relatively low, stable or slowly growing values of $\xi$.

\subsection{Top-$k$ values selection}
\label{sec:tpkpre}

We will verify the accuracy of the top-$k$ selection strategy used by O$k$-Top$k$ on different neural network models. For VGG-16 and LSTM, the models are trained for 160 epochs until convergence with $\tau^\prime$=32. Recall that $\tau^\prime$ is the period of thresholds re-evaluation. For BERT, the model is trained for 200,000 iterations (more than 20 hours on 32 GPU nodes) with $\tau^\prime$=128. The numbers of local and global top-$k$ values selected by O$k$-Top$k$ during training are monitored. We also record the values of $k$ predicted by Gaussian$k$ for comparison. The results are reported in Figure~\ref{fig:topknumbers}. We can see that the numbers of both local and global top-$k$ values selected by O$k$-Top$k$ are very close to the accurate number for a given density, except that O$k$-Top$k$ overestimates the value of $k$ in the early epochs of VGG-16 and LSTM. 
For both local top-$k$ and global top-$k$ on three models, the average deviation from the accurate number is below 11\%. For example, the average deviation for local top-$k$ selection on BERT is only 1.4\%. These results demonstrate the accuracy of the threshold reuse strategy adopted by O$k$-Top$k$. On the contrary, Gaussian$k$ overestimates the value of $k$ in the first few epochs and then severely underestimate $k$ (an order of magnitude lower than the accurate number) in the following epochs. This can be explained by the difference between Gaussian and the real distributions, as discussed in Section~\ref{para:topkselect}. 

As a comparison, we also count the density of the output buffer (i.e., the accumulated gradient) for Top$k$DSA (Top$k$A has the same density), which expands to 13.2\% and 34.5\% on average for VGG-16 (local density = 1.0\%, on 16 GPUs) and LSTM (local density = 2.0\%, on 32 GPUs), respectively. These statistics show the effect of the fill-in issue for Top$k$DSA.

\begin{figure}[!ht]
  \centering
  \includegraphics[width=0.99\linewidth]{./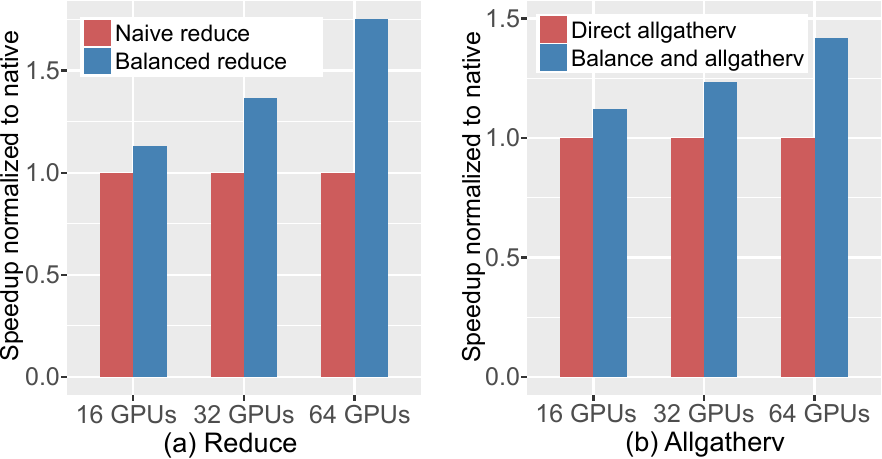}
  \caption{Evaluation for communication balancing in O$k$-Top$k$ using BERT with density = 1.0\%.}
  \label{bertbalance}
\end{figure}

\subsection{Optimizations for load balancing in O$k$-Top$k$}
\label{sec:evalbalance}

To evaluate the effectiveness of the load balancing optimizations of O($k$) sparse allreduce, we train BERT for 8,192 iterations and report the average values. 

First, we evaluate the periodic space repartition strategy (discussed in Section~\ref{para:splitreduce}) for load balancing in the phase of \textit{split and reduce}. The results are presented in Figure~\ref{bertbalance}(a). Recall that we set the period $\tau$ to 64. The repartition overhead is counted and averaged in the runtime of the balanced reduce. In the naive reduce, the gradient space is partitioned into equal-sized regions, regardless of the coordinate distribution of the local top-$k$ values. The balanced reduce achieves 1.13x to 1.75x speedup over the naive one, with a trend of more significant speedup on more GPUs. This trend can be explained by that the load imbalance in the naive reduce incurs up to $2(P-1)k$ communication volume (proportional to $P$). While the balanced reduce incurs less than $2k$ communication volume, which is more scalable.

Next, we evaluate the data balancing strategy (discussed in Section~\ref{para:ballgatherv}) in the phase of \textit{balance and allgatherv}. Although data balancing helps to bound the bandwidth overhead of allgatherv, there is no need to conduct it if the data is roughly balanced already. Empirically, we choose to conduct data balancing before allgatherv if the max data size among $P$ workers is more than four times larger than the average data size, and otherwise use an allgatherv directly. Figure~\ref{bertbalance}(b) presents the results for the iterations where data balancing is triggered. Data balancing and allgatherv achieve 1.12x to 1.43x speedup over the direct allgatherv. For similar reasons as 
\textit{split and reduce}, more speedup is achieved on more GPUs.

\subsection{Case studies on training time and model convergence}
\label{sec:casestudies}

We study the training time and model convergence using real-world applications listed in Table~\ref{tab:networks}. For training time per iteration, we report the average value of full training. To better understand the results, we make a further breakdown of the training time, including sparsification (i.e., top-$k$ selection from the gradient), communication (i.e., dense or sparse allreduces), and computation (i.e., forward and backward passes) plus I/O (i.e., sampling from dataset).

As discussed in Section~\ref{sec:tpkpre}, Gaussian$k$ usually underestimates the value of $k$, which makes the actual density far below the setting. Both empirical and theoretical results~\cite{renggli2019sparcml,shi2019distributed,alistarh2018convergence} show that a very low density would jeopardize the convergence. To make a fair comparison between the counterparts for both training time and model accuracy, we gradually scale the predicted threshold of Gaussian$k$ until the number of selected values is more than 3$k$/4. The threshold adjustment is also suggested by~\cite{shi2019understanding}, although it is difficult to be accurate. The threshold adjustment may slightly increase the sparsification overhead of Gaussian$k$, but compared with the other overheads it can be ignored (see the following results).

\begin{figure}[!t]
  \centering
  \begin{subfigure}{\linewidth}
    \centering
    \includegraphics[width=0.88\linewidth]{./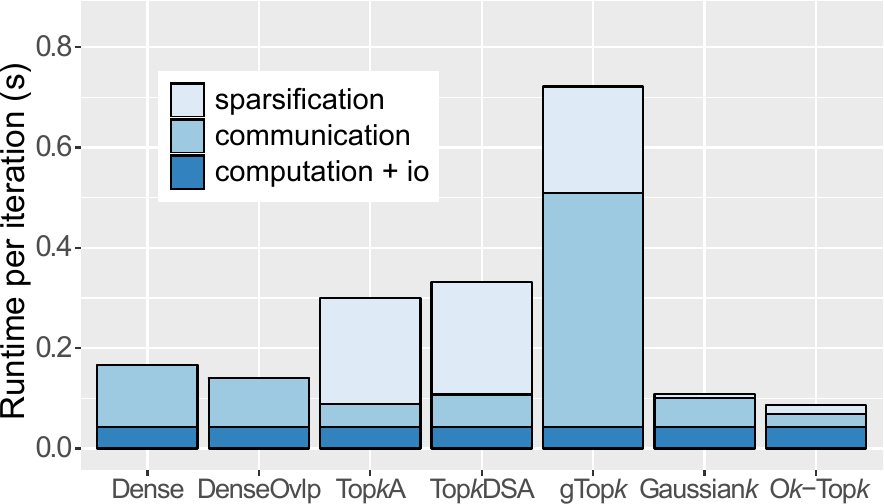}
    \caption{Running on 16 GPU nodes with global batch size = 256.}
    \label{vgg16nodestime}
  \end{subfigure}
  \begin{subfigure}{\linewidth}
    \centering
  \includegraphics[width=0.88\linewidth]{./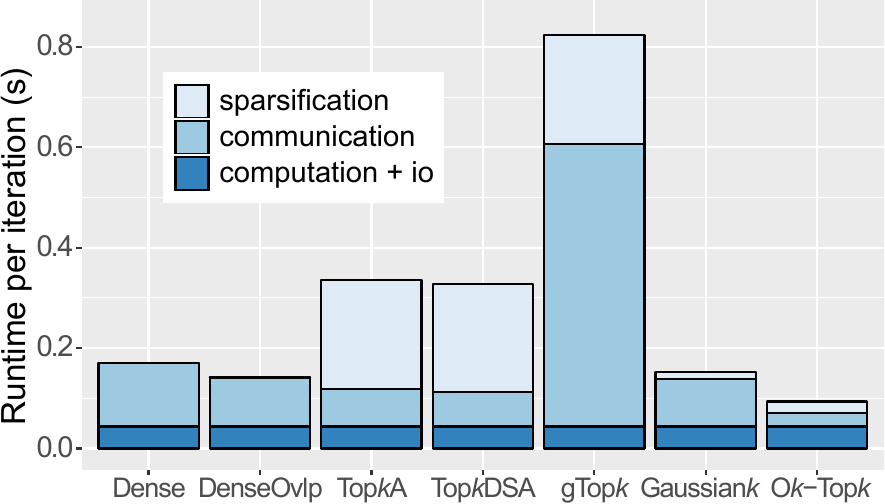}
      \caption{Running on 32 GPU nodes with global batch size = 512.}\label{vgg32nodestime}
  \end{subfigure}
  \caption{Weak scaling of training VGG-16 on Cifar-10 with density = 2.0\%.}
  \label{vggruntime}
\end{figure}

\begin{figure}[!t]
  \centering
  \begin{subfigure}{\linewidth}
    \centering
    \includegraphics[width=0.88\linewidth]{./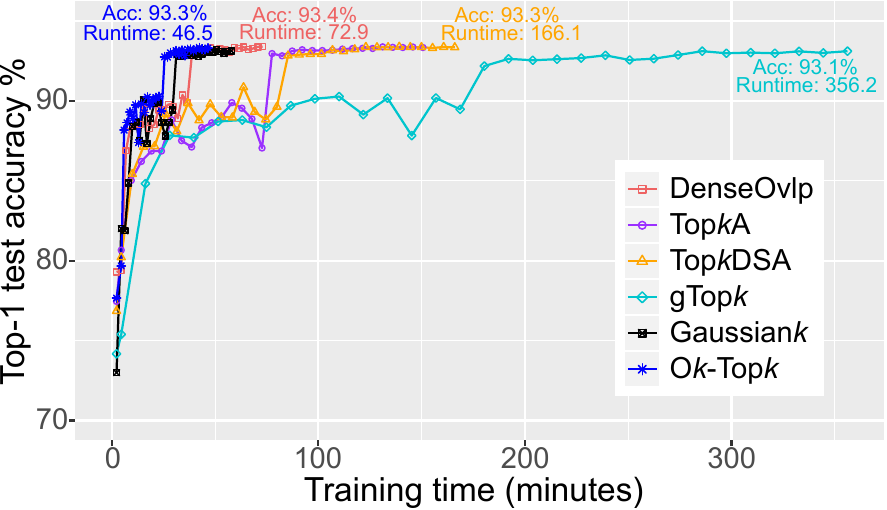}
    \caption{Running on 16 GPU nodes with global batch size = 256.}
    \label{vgg16nodesacc}
  \end{subfigure}
  \begin{subfigure}{\linewidth}
    \centering
  \includegraphics[width=0.88\linewidth]{./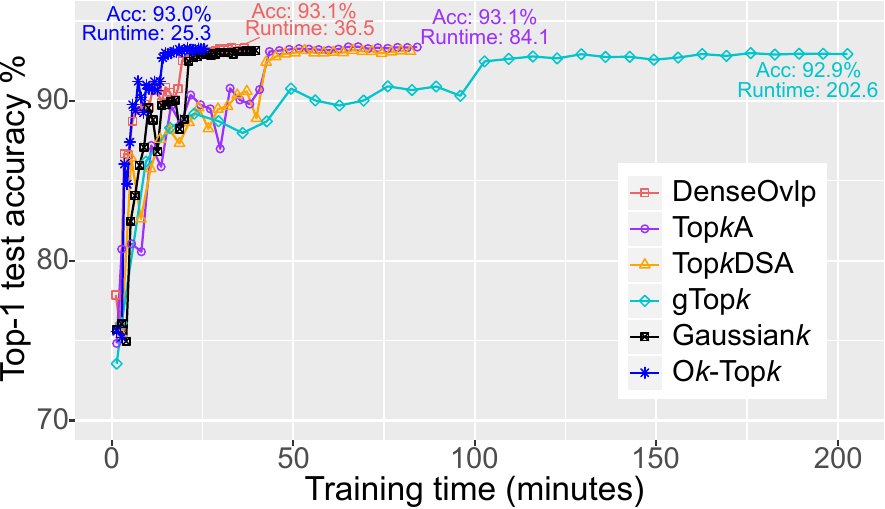}
      \caption{Running on 32 GPU nodes with global batch size = 512.}\label{vgg32nodesacc}
  \end{subfigure}
  \caption{Top-1 test accuracy for VGG-16 on Cifar-10 with density = 2.0\% training for 160 epochs.}
  \label{vggacc}
\end{figure}

\subsubsection{Image classification}

Figure~\ref{vggruntime} presents the results of weak scaling for training VGG-16 on Cifar-10. DenseOvlp outperforms Dense by enabling communication and computation overlap. Although Top$k$A and Top$k$DSA have lower communication overhead than DenseOvlp, they have a high overhead for sparsification, which makes the benefit of lower communication disappear. Note that the communication overhead of gTop$k$ seems much higher than the others; this is because the overhead of hierarchical top-$k$ selections in the reduction-tree (with $\log_{}P$ steps) is also counted in the communication overhead. Among all sparse allreduce schemes, Gaussian$k$ has the lowest sparsification overhead. O$k$-Top$k$ has the lowest communication overhead; by using the threshold reuse strategy, O$k$-Top$k$ only has a slightly higher sparsification overhead than Gaussian$k$. When scaling from 16 GPUs to 32 GPUs, the communication overhead of Top$k$A and Gaussian$k$ almost doubles. This is because allgather-based sparse allreduce is not scalable (see the performance model in Table~\ref{tab:summary}). On 32 GPU nodes, O$k$-Top$k$ outperforms the other schemes by 1.51x-8.83x for the total training time.

\begin{figure}[!h]
  \centering
  \begin{subfigure}{\linewidth}
    \centering
    \includegraphics[width=0.86\linewidth]{./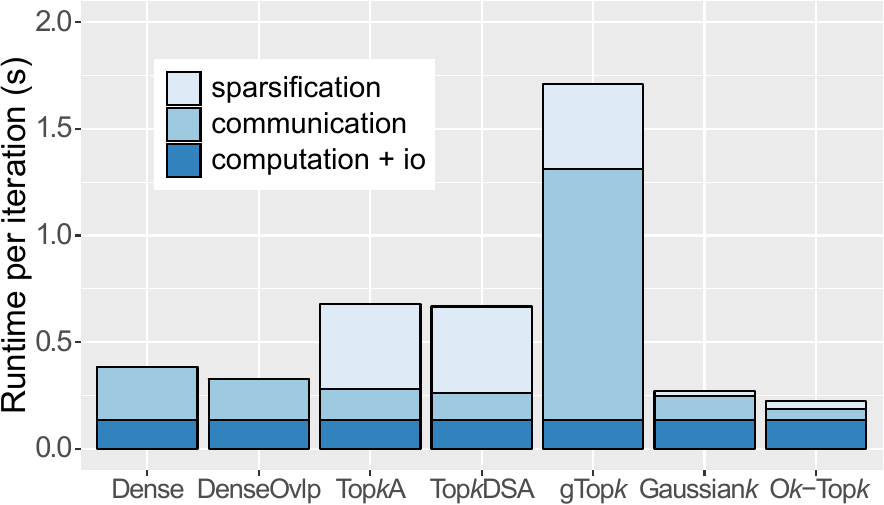}
    \caption{Running on 32 GPU nodes with global batch size = 64.}
    \label{lstm32nodestime}
  \end{subfigure}
  \begin{subfigure}{\linewidth}
    \centering
  \includegraphics[width=0.86\linewidth]{./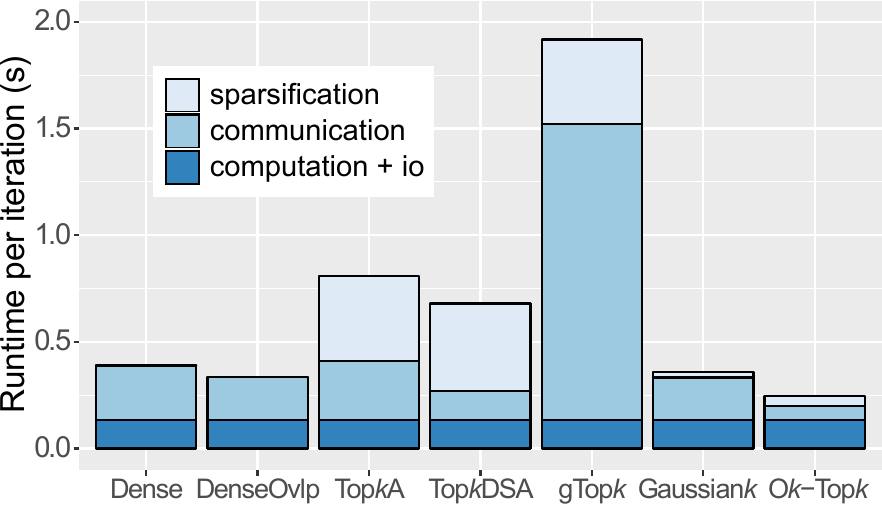}
    \caption{Running on 64 GPU nodes with global batch size = 128.}
    \label{lstm64nodestime}
  \end{subfigure}
  \caption{Weak scaling of training LSTM on AN4 with density = 2.0\%.}
  \label{lstmruntime}
\end{figure}

\begin{figure}[!h]
  \centering
  \begin{subfigure}{\linewidth}
    \centering
    \includegraphics[width=0.86\linewidth]{./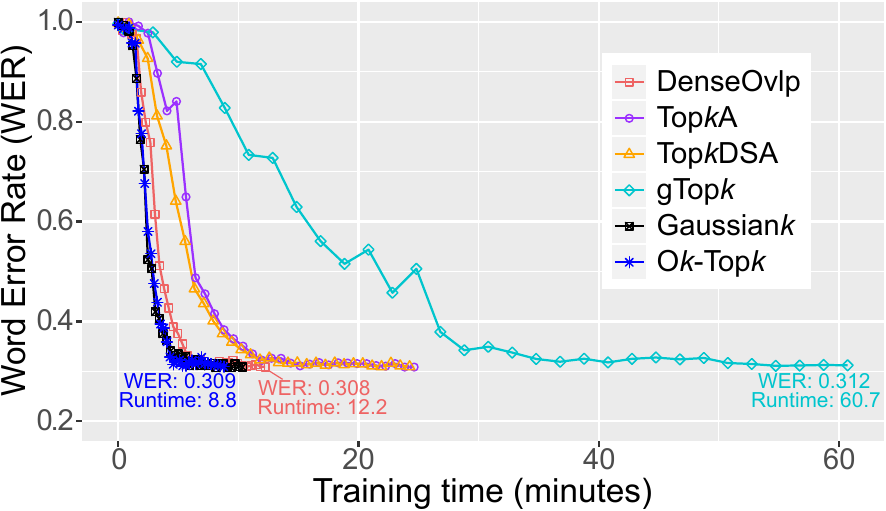}
    \caption{Running on 32 GPU nodes with global batch size = 64.}
    \label{lstm32nodesacc}
  \end{subfigure}
  \begin{subfigure}{\linewidth}
    \centering
  \includegraphics[width=.86\linewidth]{./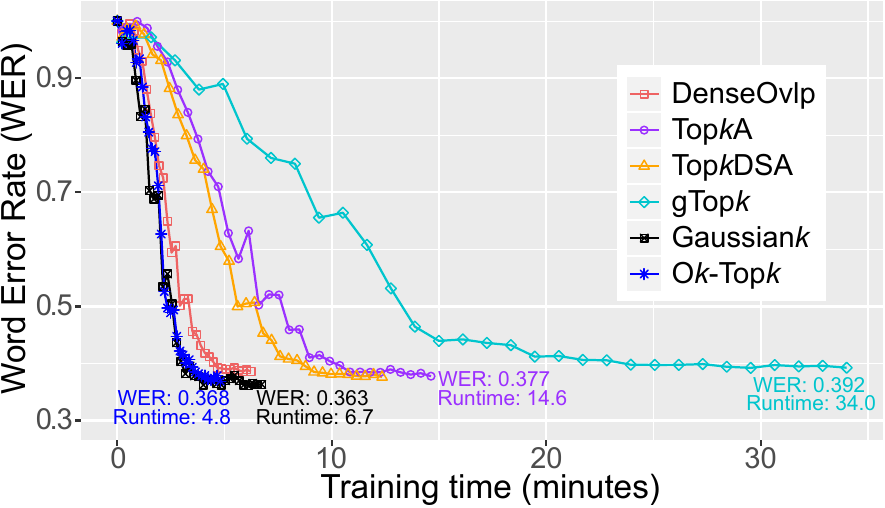}
    \caption{Running on 64 GPU nodes with global batch size = 128.}
    \label{lstm64nodesacc}
  \end{subfigure}
  \caption{WER for LSTM on AN4 with density = 2.0\% training for 160 epochs.}
  \label{lstmacc}
\end{figure}

Figure~\ref{vggacc} presents the Top-1 test accuracy as a function of runtime by training VGG-16 on Cifar-10 for 160 epochs. On both 16 and 32 GPUs, the accuracy achieved by O$k$-Top$k$ is very close to dense allreduce. We did not do any hyperparameter optimization except simply diminishing the learning rate. The accuracy results are consistent with these reported in machine learning community~\cite{ayinde2018building,shi2019understanding}. On both 16 and 32 GPUs, O$k$-Top$k$ achieves the fastest time-to-solution.

\begin{figure*}[ht!]
\centering\includegraphics[angle=90, width=0.99\linewidth]{./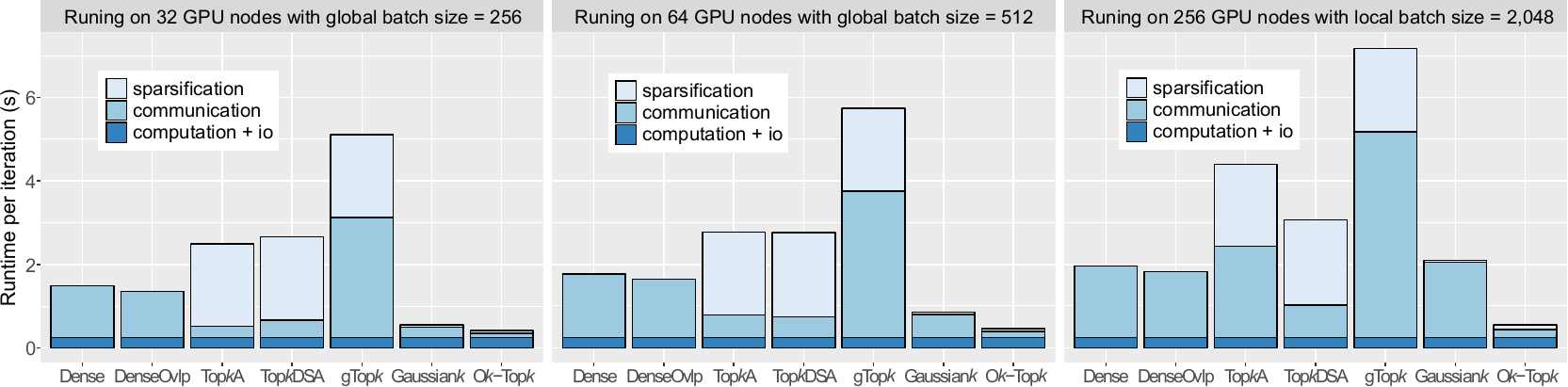}
\caption{\label{berttime} Weak scaling (from 32 to 256 GPU nodes) of training BERT on Wikipedia with density = 1.0\%.}
\end{figure*}

\begin{figure}[ht!]
\centering\includegraphics[width=0.9\linewidth]{./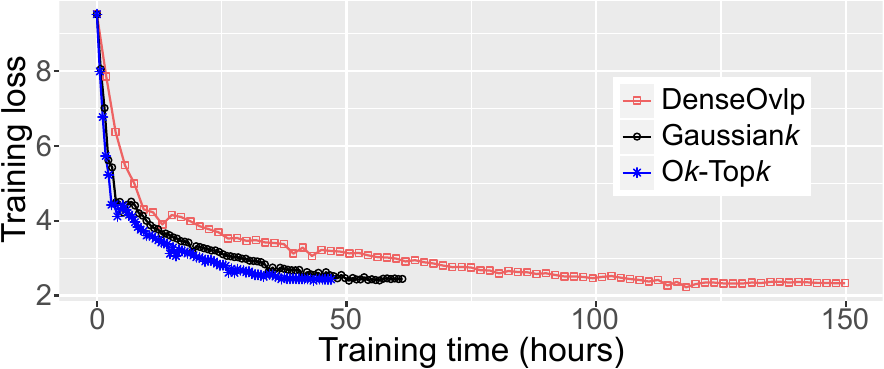}
\caption{\label{bertacc} BERT pre-training on 32 GPU nodes for 400,000 iterations with global batch size = 256 and density = 1.0\%.}
\end{figure}

\subsubsection{Speech recognition}

Figure~\ref{lstmruntime} presents the results of weak scaling for training LSTM on AN4. Similar to the results on VGG-16, O$k$-Top$k$ has a better scalability than the counterparts. On 64 GPUs, O$k$-Top$k$ outperforms the other schemes by 1.34x-7.71x for the total training time.

Figure~\ref{lstmacc} presents the test Word Error Rate (WER, the smaller the better) as a function of runtime by training for 160 epochs. On 32 GPUs, O$k$-Top$k$ is 1.39x faster than DenseOvlp, and achieves 0.309 WER, which is very close to DenseOvlp (0.308). On 64 GPUs, all schemes achieve higher WERs than these on 32 GPUs. This is because the model accuracy is compromised by using a larger global batch size, which is also observed in many other deep learning tasks~\cite{you2018imagenet,you2019largelstm,ben2019demystifying}. How to tune hyperparameters for better accuracy with large batches is not the topic of this work. Surprisingly, on 64 GPUs, O$k$-Top$k$, Gaussian$k$, Top$k$A and Top$k$DSA achieve lower WERs than DenseOvlp, which may be caused by the noise introduced by the sparsification. Overall, on both 32 and 64 GPUs, O$k$-Top$k$ achieves the fastest time-to-solution.

\subsubsection{Natural language processing}

BERT~\cite{devlin2018bert} is a popular language model based on Transformer~\cite{vaswani2017attention}. The model is usually pre-trained on a large dataset and then fine-tuned for various downstream tasks. Pre-training is commonly much more expensive (years on a single GPU) than fine-tuning. Therefore, we focus on pre-training in the evaluation.

Figure~\ref{berttime} presents the results of weak scaling for pre-training BERT. When scaling to 256 GPUs, the communication overhead of Top$k$A and Gaussian$k$ is even higher than the dense allreduce, which again demonstrates that the allgather-based sparse allreduce is not scalable. Top$k$DSA exhibits better scalablity than the allgather-based sparse allreduce, but its communication overhead also significantly increases, since the more workers, the more severe the fill-in problem~\cite{renggli2019sparcml}. On 256 GPUs, O$k$-Top$k$ outperforms all counterparts by 3.29x-12.95x. Using 32 nodes as the baseline, O$k$-Top$k$ achieves 76.3\% parallel efficiency on 256 GPUs in weak scaling, which demonstrates a strong scalalibity of O$k$-Top$k$. 

In Figure~\ref{bertacc}, we report the training loss by pre-training BERT from scratch on the Wikipedia dataset (containing 114.5 million sequences with a max length of 128) for 400,000 iterations. Eventually, the training loss of O$k$-Top$k$ decreases to 2.43, which is very close to DenseOvlp (2.33). These results show that O$k$-Top$k$ has a similar convergence rate as the dense allreduce for BERT pre-training. Compared with DenseOvlp, O$k$-Top$k$ reduces the total training time on 32 GPUs from 150 hours to 47 hours (more than 3x speedup), and also outperforms Gaussian$k$ by 1.30x. Since pre-training BERT is very costly (energy- and time-consuming), in Figure~\ref{bertacc} we only present the results for two important baselines (i.e., Gaussian$k$, with the highest training throughput among all baselines, and DenseOvlp, a lossless approach). Since the other baselines are inferior to Gaussian$k$ and DenseOvlp in terms of training throughput and not better than DenseOvlp in terms of convergence rate, it is sufficient to show the advantage of O$k$-Top$k$ by comparing it with these two important baselines in Figure~\ref{bertacc}.

\section{Conclusion}

O$k$-Top$k$ is a novel scheme for distributed deep learning training with sparse gradients. The sparse allreduce of O$k$-Top$k$ incurs less than 6$k$ communication volume, which is  asymptotically optimal and more scalable than the counterparts. O$k$-Top$k$ enables an efficient and accurate top-$k$ values prediction by utilizing the temporal locality of gradient value statistics. Empirical results for data-parallel training of real-world deep learning models on the Piz Daint supercomputer show that O$k$-Top$k$ significantly improves the training throughput while guaranteeing the model accuracy. The throughput improvement would be more significant on commodity clusters with low-bandwidth network. We foresee that our 
scheme will play an important role in scalable distributed training for large-scale models with low communication overhead. In future work, we aim to further utilize O$k$-Top$k$ to reduce the communication overhead in distributed training with a hybrid data and pipeline parallelism~\cite{li2021chimera,fan2021dapple,narayanan2021memory,narayanan2019pipedream}.

\begin{acks}
This project has received funding from the European Research Council (ERC) under the European Union's
Horizon 2020 programme (grant agreement DAPP, No. 678880, EPiGRAM-HS, No. 801039, and MAELSTROM, No. 955513). 
We also thank the Swiss National 
Supercomputing Center for providing the computing resources and technical support.
\end{acks}

\bibliographystyle{ACM-Reference-Format}
\bibliography{mybib}


\begin{thebibliography}{53}


\ifx \showCODEN    \undefined \def \showCODEN     #1{\unskip}     \fi
\ifx \showDOI      \undefined \def \showDOI       #1{#1}\fi
\ifx \showISBNx    \undefined \def \showISBNx     #1{\unskip}     \fi
\ifx \showISBNxiii \undefined \def \showISBNxiii  #1{\unskip}     \fi
\ifx \showISSN     \undefined \def \showISSN      #1{\unskip}     \fi
\ifx \showLCCN     \undefined \def \showLCCN      #1{\unskip}     \fi
\ifx \shownote     \undefined \def \shownote      #1{#1}          \fi
\ifx \showarticletitle \undefined \def \showarticletitle #1{#1}   \fi
\ifx \showURL      \undefined \def \showURL       {\relax}        \fi
\providecommand\bibfield[2]{#2}
\providecommand\bibinfo[2]{#2}
\providecommand\natexlab[1]{#1}
\providecommand\showeprint[2][]{arXiv:#2}

\bibitem[\protect\citeauthoryear{Acero and Stern}{Acero and Stern}{1990}]%
        {acero1990environmental}
\bibfield{author}{\bibinfo{person}{Alejandro Acero} {and}
  \bibinfo{person}{Richard~M Stern}.} \bibinfo{year}{1990}\natexlab{}.
\newblock \showarticletitle{Environmental robustness in automatic speech
  recognition}. In \bibinfo{booktitle}{\emph{International Conference on
  Acoustics, Speech, and Signal Processing}}. IEEE, \bibinfo{pages}{849--852}.
\newblock


\bibitem[\protect\citeauthoryear{Aji and Heafield}{Aji and Heafield}{2017}]%
        {aji2017sparse}
\bibfield{author}{\bibinfo{person}{Alham~Fikri Aji} {and}
  \bibinfo{person}{Kenneth Heafield}.} \bibinfo{year}{2017}\natexlab{}.
\newblock \showarticletitle{Sparse communication for distributed gradient
  descent}.
\newblock \bibinfo{journal}{\emph{arXiv preprint arXiv:1704.05021}}
  (\bibinfo{year}{2017}).
\newblock


\bibitem[\protect\citeauthoryear{Alistarh, Grubic, Li, Tomioka, and
  Vojnovic}{Alistarh et~al\mbox{.}}{2017}]%
        {alistarh2017qsgd}
\bibfield{author}{\bibinfo{person}{Dan Alistarh}, \bibinfo{person}{Demjan
  Grubic}, \bibinfo{person}{Jerry Li}, \bibinfo{person}{Ryota Tomioka}, {and}
  \bibinfo{person}{Milan Vojnovic}.} \bibinfo{year}{2017}\natexlab{}.
\newblock \showarticletitle{QSGD: Communication-efficient SGD via gradient
  quantization and encoding}.
\newblock \bibinfo{journal}{\emph{Advances in Neural Information Processing
  Systems}}  \bibinfo{volume}{30} (\bibinfo{year}{2017}),
  \bibinfo{pages}{1709--1720}.
\newblock


\bibitem[\protect\citeauthoryear{Alistarh, Hoefler, Johansson, Khirirat,
  Konstantinov, and Renggli}{Alistarh et~al\mbox{.}}{2018}]%
        {alistarh2018convergence}
\bibfield{author}{\bibinfo{person}{Dan Alistarh}, \bibinfo{person}{Torsten
  Hoefler}, \bibinfo{person}{Mikael Johansson}, \bibinfo{person}{Sarit
  Khirirat}, \bibinfo{person}{Nikola Konstantinov}, {and}
  \bibinfo{person}{C{\'e}dric Renggli}.} \bibinfo{year}{2018}\natexlab{}.
\newblock \showarticletitle{The convergence of sparsified gradient methods}.
\newblock \bibinfo{journal}{\emph{arXiv preprint arXiv:1809.10505}}
  (\bibinfo{year}{2018}).
\newblock


\bibitem[\protect\citeauthoryear{Alverson, Froese, Kaplan, and Roweth}{Alverson
  et~al\mbox{.}}{2012}]%
        {alverson2012cray}
\bibfield{author}{\bibinfo{person}{Bob Alverson}, \bibinfo{person}{Edwin
  Froese}, \bibinfo{person}{Larry Kaplan}, {and} \bibinfo{person}{Duncan
  Roweth}.} \bibinfo{year}{2012}\natexlab{}.
\newblock \showarticletitle{Cray XC series network}.
\newblock \bibinfo{journal}{\emph{Cray Inc., White Paper WP-Aries01-1112}}
  (\bibinfo{year}{2012}).
\newblock


\bibitem[\protect\citeauthoryear{Ayinde and Zurada}{Ayinde and Zurada}{2018}]%
        {ayinde2018building}
\bibfield{author}{\bibinfo{person}{Babajide~O Ayinde} {and}
  \bibinfo{person}{Jacek~M Zurada}.} \bibinfo{year}{2018}\natexlab{}.
\newblock \showarticletitle{Building efficient convnets using redundant feature
  pruning}.
\newblock \bibinfo{journal}{\emph{arXiv preprint arXiv:1802.07653}}
  (\bibinfo{year}{2018}).
\newblock


\bibitem[\protect\citeauthoryear{Ben-Nun and Hoefler}{Ben-Nun and
  Hoefler}{2019}]%
        {ben2019demystifying}
\bibfield{author}{\bibinfo{person}{Tal Ben-Nun} {and} \bibinfo{person}{Torsten
  Hoefler}.} \bibinfo{year}{2019}\natexlab{}.
\newblock \showarticletitle{Demystifying parallel and distributed deep
  learning: An in-depth concurrency analysis}.
\newblock \bibinfo{journal}{\emph{ACM Computing Surveys (CSUR)}}
  \bibinfo{volume}{52}, \bibinfo{number}{4} (\bibinfo{year}{2019}),
  \bibinfo{pages}{1--43}.
\newblock


\bibitem[\protect\citeauthoryear{Bernstein, Wang, Azizzadenesheli, and
  Anandkumar}{Bernstein et~al\mbox{.}}{2018}]%
        {bernstein2018signsgd}
\bibfield{author}{\bibinfo{person}{Jeremy Bernstein}, \bibinfo{person}{Yu-Xiang
  Wang}, \bibinfo{person}{Kamyar Azizzadenesheli}, {and}
  \bibinfo{person}{Animashree Anandkumar}.} \bibinfo{year}{2018}\natexlab{}.
\newblock \showarticletitle{signSGD: Compressed optimisation for non-convex
  problems}. In \bibinfo{booktitle}{\emph{International Conference on Machine
  Learning}}. PMLR, \bibinfo{pages}{560--569}.
\newblock


\bibitem[\protect\citeauthoryear{Bottou, Curtis, and Nocedal}{Bottou
  et~al\mbox{.}}{2018}]%
        {bottou2018optimization}
\bibfield{author}{\bibinfo{person}{L{\'e}on Bottou}, \bibinfo{person}{Frank~E
  Curtis}, {and} \bibinfo{person}{Jorge Nocedal}.}
  \bibinfo{year}{2018}\natexlab{}.
\newblock \showarticletitle{Optimization methods for large-scale machine
  learning}.
\newblock \bibinfo{journal}{\emph{SIAM Rev.}} \bibinfo{volume}{60},
  \bibinfo{number}{2} (\bibinfo{year}{2018}).
\newblock


\bibitem[\protect\citeauthoryear{Brown, Mann, Ryder, Subbiah, Kaplan, Dhariwal,
  Neelakantan, Shyam, Sastry, Askell, et~al\mbox{.}}{Brown
  et~al\mbox{.}}{2020}]%
        {brown2020language}
\bibfield{author}{\bibinfo{person}{Tom~B Brown}, \bibinfo{person}{Benjamin
  Mann}, \bibinfo{person}{Nick Ryder}, \bibinfo{person}{Melanie Subbiah},
  \bibinfo{person}{Jared Kaplan}, \bibinfo{person}{Prafulla Dhariwal},
  \bibinfo{person}{Arvind Neelakantan}, \bibinfo{person}{Pranav Shyam},
  \bibinfo{person}{Girish Sastry}, \bibinfo{person}{Amanda Askell},
  {et~al\mbox{.}}} \bibinfo{year}{2020}\natexlab{}.
\newblock \showarticletitle{Language models are few-shot learners}.
\newblock \bibinfo{journal}{\emph{arXiv preprint arXiv:2005.14165}}
  (\bibinfo{year}{2020}).
\newblock


\bibitem[\protect\citeauthoryear{Cai, Lin, Xia, Chen, Han, Wang, and Yang}{Cai
  et~al\mbox{.}}{2018}]%
        {cai2018long}
\bibfield{author}{\bibinfo{person}{Yi Cai}, \bibinfo{person}{Yujun Lin},
  \bibinfo{person}{Lixue Xia}, \bibinfo{person}{Xiaoming Chen},
  \bibinfo{person}{Song Han}, \bibinfo{person}{Yu Wang}, {and}
  \bibinfo{person}{Huazhong Yang}.} \bibinfo{year}{2018}\natexlab{}.
\newblock \showarticletitle{Long live time: improving lifetime for
  training-in-memory engines by structured gradient sparsification}. In
  \bibinfo{booktitle}{\emph{Proceedings of the 55th Annual Design Automation
  Conference}}. \bibinfo{pages}{1--6}.
\newblock


\bibitem[\protect\citeauthoryear{Chan, Heimlich, Purkayastha, and Van
  De~Geijn}{Chan et~al\mbox{.}}{2007}]%
        {chan2007collective}
\bibfield{author}{\bibinfo{person}{Ernie Chan}, \bibinfo{person}{Marcel
  Heimlich}, \bibinfo{person}{Avi Purkayastha}, {and} \bibinfo{person}{Robert
  Van De~Geijn}.} \bibinfo{year}{2007}\natexlab{}.
\newblock \showarticletitle{Collective communication: theory, practice, and
  experience}.
\newblock \bibinfo{journal}{\emph{Concurrency and Computation: Practice and
  Experience}} \bibinfo{volume}{19}, \bibinfo{number}{13}
  (\bibinfo{year}{2007}), \bibinfo{pages}{1749--1783}.
\newblock


\bibitem[\protect\citeauthoryear{Devlin, Chang, Lee, and Toutanova}{Devlin
  et~al\mbox{.}}{2018}]%
        {devlin2018bert}
\bibfield{author}{\bibinfo{person}{Jacob Devlin}, \bibinfo{person}{Ming-Wei
  Chang}, \bibinfo{person}{Kenton Lee}, {and} \bibinfo{person}{Kristina
  Toutanova}.} \bibinfo{year}{2018}\natexlab{}.
\newblock \showarticletitle{Bert: Pre-training of deep bidirectional
  transformers for language understanding}.
\newblock \bibinfo{journal}{\emph{arXiv preprint arXiv:1810.04805}}
  (\bibinfo{year}{2018}).
\newblock


\bibitem[\protect\citeauthoryear{Dryden, Moon, Jacobs, and Van~Essen}{Dryden
  et~al\mbox{.}}{2016}]%
        {dryden2016communication}
\bibfield{author}{\bibinfo{person}{Nikoli Dryden}, \bibinfo{person}{Tim Moon},
  \bibinfo{person}{Sam~Ade Jacobs}, {and} \bibinfo{person}{Brian Van~Essen}.}
  \bibinfo{year}{2016}\natexlab{}.
\newblock \showarticletitle{Communication quantization for data-parallel
  training of deep neural networks}. In \bibinfo{booktitle}{\emph{2016 2nd
  Workshop on Machine Learning in HPC Environments (MLHPC)}}. IEEE,
  \bibinfo{pages}{1--8}.
\newblock


\bibitem[\protect\citeauthoryear{Fan, Rong, Meng, Cao, Wang, Zheng, Wu, Long,
  Yang, Xia, et~al\mbox{.}}{Fan et~al\mbox{.}}{2021}]%
        {fan2021dapple}
\bibfield{author}{\bibinfo{person}{Shiqing Fan}, \bibinfo{person}{Yi Rong},
  \bibinfo{person}{Chen Meng}, \bibinfo{person}{Zongyan Cao},
  \bibinfo{person}{Siyu Wang}, \bibinfo{person}{Zhen Zheng},
  \bibinfo{person}{Chuan Wu}, \bibinfo{person}{Guoping Long},
  \bibinfo{person}{Jun Yang}, \bibinfo{person}{Lixue Xia}, {et~al\mbox{.}}}
  \bibinfo{year}{2021}\natexlab{}.
\newblock \showarticletitle{DAPPLE: a pipelined data parallel approach for
  training large models}. In \bibinfo{booktitle}{\emph{Proceedings of the 26th
  ACM SIGPLAN Symposium on Principles and Practice of Parallel Programming}}.
  \bibinfo{pages}{431--445}.
\newblock


\bibitem[\protect\citeauthoryear{Fei, Ho, Sahu, Canini, and Sapio}{Fei
  et~al\mbox{.}}{2021}]%
        {fei2021efficient}
\bibfield{author}{\bibinfo{person}{Jiawei Fei}, \bibinfo{person}{Chen-Yu Ho},
  \bibinfo{person}{Atal~N Sahu}, \bibinfo{person}{Marco Canini}, {and}
  \bibinfo{person}{Amedeo Sapio}.} \bibinfo{year}{2021}\natexlab{}.
\newblock \showarticletitle{Efficient sparse collective communication and its
  application to accelerate distributed deep learning}. In
  \bibinfo{booktitle}{\emph{Proceedings of the 2021 ACM SIGCOMM 2021
  Conference}}. \bibinfo{pages}{676--691}.
\newblock


\bibitem[\protect\citeauthoryear{Foley and Danskin}{Foley and Danskin}{2017}]%
        {foley2017ultra}
\bibfield{author}{\bibinfo{person}{Denis Foley} {and} \bibinfo{person}{John
  Danskin}.} \bibinfo{year}{2017}\natexlab{}.
\newblock \showarticletitle{{Ultra-performance Pascal GPU and NVLink
  Interconnect}}.
\newblock \bibinfo{journal}{\emph{IEEE Micro}} \bibinfo{volume}{37},
  \bibinfo{number}{2} (\bibinfo{year}{2017}), \bibinfo{pages}{7--17}.
\newblock


\bibitem[\protect\citeauthoryear{Goyal, Doll{\'a}r, Girshick, Noordhuis,
  Wesolowski, Kyrola, Tulloch, Jia, and He}{Goyal et~al\mbox{.}}{2017}]%
        {goyal2017accurate}
\bibfield{author}{\bibinfo{person}{Priya Goyal}, \bibinfo{person}{Piotr
  Doll{\'a}r}, \bibinfo{person}{Ross Girshick}, \bibinfo{person}{Pieter
  Noordhuis}, \bibinfo{person}{Lukasz Wesolowski}, \bibinfo{person}{Aapo
  Kyrola}, \bibinfo{person}{Andrew Tulloch}, \bibinfo{person}{Yangqing Jia},
  {and} \bibinfo{person}{Kaiming He}.} \bibinfo{year}{2017}\natexlab{}.
\newblock \showarticletitle{Accurate, large minibatch sgd: Training imagenet in
  1 hour}.
\newblock \bibinfo{journal}{\emph{arXiv preprint arXiv:1706.02677}}
  (\bibinfo{year}{2017}).
\newblock


\bibitem[\protect\citeauthoryear{Han, Wang, and Leung}{Han
  et~al\mbox{.}}{2020}]%
        {han2020adaptive}
\bibfield{author}{\bibinfo{person}{Pengchao Han}, \bibinfo{person}{Shiqiang
  Wang}, {and} \bibinfo{person}{Kin~K Leung}.} \bibinfo{year}{2020}\natexlab{}.
\newblock \showarticletitle{Adaptive gradient sparsification for efficient
  federated learning: An online learning approach}. In
  \bibinfo{booktitle}{\emph{2020 IEEE 40th International Conference on
  Distributed Computing Systems (ICDCS)}}. IEEE, \bibinfo{pages}{300--310}.
\newblock


\bibitem[\protect\citeauthoryear{Hensgen, Finkel, and Manber}{Hensgen
  et~al\mbox{.}}{1988}]%
        {hensgen1988two}
\bibfield{author}{\bibinfo{person}{Debra Hensgen}, \bibinfo{person}{Raphael
  Finkel}, {and} \bibinfo{person}{Udi Manber}.}
  \bibinfo{year}{1988}\natexlab{}.
\newblock \showarticletitle{Two algorithms for barrier synchronization}.
\newblock \bibinfo{journal}{\emph{International Journal of Parallel
  Programming}} \bibinfo{volume}{17}, \bibinfo{number}{1}
  (\bibinfo{year}{1988}), \bibinfo{pages}{1--17}.
\newblock


\bibitem[\protect\citeauthoryear{Hochreiter and Schmidhuber}{Hochreiter and
  Schmidhuber}{1997}]%
        {hochreiter1997long}
\bibfield{author}{\bibinfo{person}{Sepp Hochreiter} {and}
  \bibinfo{person}{J{\"u}rgen Schmidhuber}.} \bibinfo{year}{1997}\natexlab{}.
\newblock \showarticletitle{Long short-term memory}.
\newblock \bibinfo{journal}{\emph{Neural computation}} \bibinfo{volume}{9},
  \bibinfo{number}{8} (\bibinfo{year}{1997}), \bibinfo{pages}{1735--1780}.
\newblock


\bibitem[\protect\citeauthoryear{Hoefler, Alistarh, Ben-Nun, Dryden, and
  Peste}{Hoefler et~al\mbox{.}}{2021}]%
        {hoefler2021sparsity}
\bibfield{author}{\bibinfo{person}{Torsten Hoefler}, \bibinfo{person}{Dan
  Alistarh}, \bibinfo{person}{Tal Ben-Nun}, \bibinfo{person}{Nikoli Dryden},
  {and} \bibinfo{person}{Alexandra Peste}.} \bibinfo{year}{2021}\natexlab{}.
\newblock \showarticletitle{Sparsity in Deep Learning: Pruning and growth for
  efficient inference and training in neural networks}.
\newblock \bibinfo{journal}{\emph{arXiv preprint arXiv:2102.00554}}
  (\bibinfo{year}{2021}).
\newblock


\bibitem[\protect\citeauthoryear{Horv{\'a}th, Kovalev, Mishchenko, Stich, and
  Richt{\'a}rik}{Horv{\'a}th et~al\mbox{.}}{2019}]%
        {horvath2019stochastic}
\bibfield{author}{\bibinfo{person}{Samuel Horv{\'a}th}, \bibinfo{person}{Dmitry
  Kovalev}, \bibinfo{person}{Konstantin Mishchenko}, \bibinfo{person}{Sebastian
  Stich}, {and} \bibinfo{person}{Peter Richt{\'a}rik}.}
  \bibinfo{year}{2019}\natexlab{}.
\newblock \showarticletitle{Stochastic distributed learning with gradient
  quantization and variance reduction}.
\newblock \bibinfo{journal}{\emph{arXiv preprint arXiv:1904.05115}}
  (\bibinfo{year}{2019}).
\newblock


\bibitem[\protect\citeauthoryear{Kingma and Ba}{Kingma and Ba}{2014}]%
        {kingma2014adam}
\bibfield{author}{\bibinfo{person}{Diederik~P Kingma} {and}
  \bibinfo{person}{Jimmy Ba}.} \bibinfo{year}{2014}\natexlab{}.
\newblock \showarticletitle{Adam: A method for stochastic optimization}.
\newblock \bibinfo{journal}{\emph{arXiv preprint arXiv:1412.6980}}
  (\bibinfo{year}{2014}).
\newblock


\bibitem[\protect\citeauthoryear{Li, Ben-Nun, Girolamo, Alistarh, and
  Hoefler}{Li et~al\mbox{.}}{2020a}]%
        {li2020taming}
\bibfield{author}{\bibinfo{person}{Shigang Li}, \bibinfo{person}{Tal Ben-Nun},
  \bibinfo{person}{Salvatore~Di Girolamo}, \bibinfo{person}{Dan Alistarh},
  {and} \bibinfo{person}{Torsten Hoefler}.} \bibinfo{year}{2020}\natexlab{a}.
\newblock \showarticletitle{Taming unbalanced training workloads in deep
  learning with partial collective operations}. In
  \bibinfo{booktitle}{\emph{Proceedings of the 25th ACM SIGPLAN Symposium on
  Principles and Practice of Parallel Programming}}.
\newblock


\bibitem[\protect\citeauthoryear{Li, Ben-Nun, Nadiradze, Di~Girolamo, Dryden,
  Alistarh, and Hoefler}{Li et~al\mbox{.}}{2020b}]%
        {li2020breaking}
\bibfield{author}{\bibinfo{person}{Shigang Li}, \bibinfo{person}{Tal Ben-Nun},
  \bibinfo{person}{Giorgi Nadiradze}, \bibinfo{person}{Salvatore Di~Girolamo},
  \bibinfo{person}{Nikoli Dryden}, \bibinfo{person}{Dan Alistarh}, {and}
  \bibinfo{person}{Torsten Hoefler}.} \bibinfo{year}{2020}\natexlab{b}.
\newblock \showarticletitle{Breaking (global) barriers in parallel stochastic
  optimization with wait-avoiding group averaging}.
\newblock \bibinfo{journal}{\emph{IEEE Transactions on Parallel and Distributed
  Systems}} \bibinfo{volume}{32}, \bibinfo{number}{7} (\bibinfo{year}{2020}),
  \bibinfo{pages}{1725--1739}.
\newblock


\bibitem[\protect\citeauthoryear{Li and Hoefler}{Li and Hoefler}{2021}]%
        {li2021chimera}
\bibfield{author}{\bibinfo{person}{Shigang Li} {and} \bibinfo{person}{Torsten
  Hoefler}.} \bibinfo{year}{2021}\natexlab{}.
\newblock \showarticletitle{Chimera: Efficiently Training Large-Scale Neural
  Networks with Bidirectional Pipelines}.
\newblock \bibinfo{journal}{\emph{arXiv preprint arXiv:2107.06925}}
  (\bibinfo{year}{2021}).
\newblock


\bibitem[\protect\citeauthoryear{Li, Hoefler, and Snir}{Li
  et~al\mbox{.}}{2013}]%
        {li2013numa}
\bibfield{author}{\bibinfo{person}{Shigang Li}, \bibinfo{person}{Torsten
  Hoefler}, {and} \bibinfo{person}{Marc Snir}.}
  \bibinfo{year}{2013}\natexlab{}.
\newblock \showarticletitle{NUMA-aware shared-memory collective communication
  for MPI}. In \bibinfo{booktitle}{\emph{Proceedings of the 22nd international
  symposium on High-performance parallel and distributed computing}}.
  \bibinfo{pages}{85--96}.
\newblock


\bibitem[\protect\citeauthoryear{Mahmoud, Modarres, and Smythe}{Mahmoud
  et~al\mbox{.}}{1995}]%
        {mahmoud1995analysis}
\bibfield{author}{\bibinfo{person}{Hosam~M Mahmoud}, \bibinfo{person}{Reza
  Modarres}, {and} \bibinfo{person}{Robert~T Smythe}.}
  \bibinfo{year}{1995}\natexlab{}.
\newblock \showarticletitle{Analysis of quickselect: An algorithm for order
  statistics}.
\newblock \bibinfo{journal}{\emph{RAIRO-Theoretical Informatics and
  Applications-Informatique Th{\'e}orique et Applications}}
  \bibinfo{volume}{29}, \bibinfo{number}{4} (\bibinfo{year}{1995}),
  \bibinfo{pages}{255--276}.
\newblock


\bibitem[\protect\citeauthoryear{Nadiradze, Sabour, Davies, Li, and
  Alistarh}{Nadiradze et~al\mbox{.}}{2021}]%
        {nadiradze2021asynchronous}
\bibfield{author}{\bibinfo{person}{Giorgi Nadiradze},
  \bibinfo{person}{Amirmojtaba Sabour}, \bibinfo{person}{Peter Davies},
  \bibinfo{person}{Shigang Li}, {and} \bibinfo{person}{Dan Alistarh}.}
  \bibinfo{year}{2021}\natexlab{}.
\newblock \showarticletitle{Asynchronous decentralized SGD with quantized and
  local updates}.
\newblock \bibinfo{journal}{\emph{Advances in Neural Information Processing
  Systems}}  \bibinfo{volume}{34} (\bibinfo{year}{2021}).
\newblock


\bibitem[\protect\citeauthoryear{Narayanan, Harlap, Phanishayee, Seshadri,
  Devanur, Ganger, Gibbons, and Zaharia}{Narayanan et~al\mbox{.}}{2019}]%
        {narayanan2019pipedream}
\bibfield{author}{\bibinfo{person}{Deepak Narayanan}, \bibinfo{person}{Aaron
  Harlap}, \bibinfo{person}{Amar Phanishayee}, \bibinfo{person}{Vivek
  Seshadri}, \bibinfo{person}{Nikhil~R Devanur}, \bibinfo{person}{Gregory~R
  Ganger}, \bibinfo{person}{Phillip~B Gibbons}, {and} \bibinfo{person}{Matei
  Zaharia}.} \bibinfo{year}{2019}\natexlab{}.
\newblock \showarticletitle{PipeDream: generalized pipeline parallelism for DNN
  training}. In \bibinfo{booktitle}{\emph{Proceedings of the 27th ACM Symposium
  on Operating Systems Principles}}. \bibinfo{pages}{1--15}.
\newblock


\bibitem[\protect\citeauthoryear{Narayanan, Phanishayee, Shi, Chen, and
  Zaharia}{Narayanan et~al\mbox{.}}{2021}]%
        {narayanan2021memory}
\bibfield{author}{\bibinfo{person}{Deepak Narayanan}, \bibinfo{person}{Amar
  Phanishayee}, \bibinfo{person}{Kaiyu Shi}, \bibinfo{person}{Xie Chen}, {and}
  \bibinfo{person}{Matei Zaharia}.} \bibinfo{year}{2021}\natexlab{}.
\newblock \showarticletitle{Memory-efficient pipeline-parallel dnn training}.
  In \bibinfo{booktitle}{\emph{International Conference on Machine Learning}}.
  PMLR, \bibinfo{pages}{7937--7947}.
\newblock


\bibitem[\protect\citeauthoryear{Paszke, Gross, Massa, Lerer, Bradbury, Chanan,
  Killeen, Lin, Gimelshein, Antiga, et~al\mbox{.}}{Paszke
  et~al\mbox{.}}{2019}]%
        {paszke2019pytorch}
\bibfield{author}{\bibinfo{person}{Adam Paszke}, \bibinfo{person}{Sam Gross},
  \bibinfo{person}{Francisco Massa}, \bibinfo{person}{Adam Lerer},
  \bibinfo{person}{James Bradbury}, \bibinfo{person}{Gregory Chanan},
  \bibinfo{person}{Trevor Killeen}, \bibinfo{person}{Zeming Lin},
  \bibinfo{person}{Natalia Gimelshein}, \bibinfo{person}{Luca Antiga},
  {et~al\mbox{.}}} \bibinfo{year}{2019}\natexlab{}.
\newblock \showarticletitle{Pytorch: An imperative style, high-performance deep
  learning library}. In \bibinfo{booktitle}{\emph{Advances in neural
  information processing systems}}. \bibinfo{pages}{8026--8037}.
\newblock


\bibitem[\protect\citeauthoryear{Radford, Wu, Child, Luan, Amodei, and
  Sutskever}{Radford et~al\mbox{.}}{2019}]%
        {radford2019language}
\bibfield{author}{\bibinfo{person}{Alec Radford}, \bibinfo{person}{Jeffrey Wu},
  \bibinfo{person}{Rewon Child}, \bibinfo{person}{David Luan},
  \bibinfo{person}{Dario Amodei}, {and} \bibinfo{person}{Ilya Sutskever}.}
  \bibinfo{year}{2019}\natexlab{}.
\newblock \showarticletitle{Language models are unsupervised multitask
  learners}.
\newblock \bibinfo{journal}{\emph{OpenAI blog}} \bibinfo{volume}{1},
  \bibinfo{number}{8} (\bibinfo{year}{2019}), \bibinfo{pages}{9}.
\newblock


\bibitem[\protect\citeauthoryear{Real, Aggarwal, Huang, and Le}{Real
  et~al\mbox{.}}{2019}]%
        {real2019regularized}
\bibfield{author}{\bibinfo{person}{Esteban Real}, \bibinfo{person}{Alok
  Aggarwal}, \bibinfo{person}{Yanping Huang}, {and} \bibinfo{person}{Quoc~V
  Le}.} \bibinfo{year}{2019}\natexlab{}.
\newblock \showarticletitle{Regularized evolution for image classifier
  architecture search}. In \bibinfo{booktitle}{\emph{Proceedings of the aaai
  conference on artificial intelligence}}, Vol.~\bibinfo{volume}{33}.
  \bibinfo{pages}{4780--4789}.
\newblock


\bibitem[\protect\citeauthoryear{Renggli, Ashkboos, Aghagolzadeh, Alistarh, and
  Hoefler}{Renggli et~al\mbox{.}}{2019}]%
        {renggli2019sparcml}
\bibfield{author}{\bibinfo{person}{C{\`e}dric Renggli}, \bibinfo{person}{Saleh
  Ashkboos}, \bibinfo{person}{Mehdi Aghagolzadeh}, \bibinfo{person}{Dan
  Alistarh}, {and} \bibinfo{person}{Torsten Hoefler}.}
  \bibinfo{year}{2019}\natexlab{}.
\newblock \showarticletitle{SparCML: High-performance sparse communication for
  machine learning}. In \bibinfo{booktitle}{\emph{Proceedings of the
  International Conference for High Performance Computing, Networking, Storage
  and Analysis}}. \bibinfo{pages}{1--15}.
\newblock


\bibitem[\protect\citeauthoryear{Sensi, Girolamo, McMahon, Roweth, and
  Hoefler}{Sensi et~al\mbox{.}}{2020}]%
        {slingshot}
\bibfield{author}{\bibinfo{person}{Daniele~De Sensi},
  \bibinfo{person}{Salvatore~Di Girolamo}, \bibinfo{person}{Kim~H. McMahon},
  \bibinfo{person}{Duncan Roweth}, {and} \bibinfo{person}{Torsten Hoefler}.}
  \bibinfo{year}{2020}\natexlab{}.
\newblock \showarticletitle{{An In-Depth Analysis of the Slingshot
  Interconnect}}. In \bibinfo{booktitle}{\emph{Proceedings of the International
  Conference for High Performance Computing, Networking, Storage and Analysis
  (SC20)}}.
\newblock


\bibitem[\protect\citeauthoryear{Sergeev and Del~Balso}{Sergeev and
  Del~Balso}{2018}]%
        {sergeev2018horovod}
\bibfield{author}{\bibinfo{person}{Alexander Sergeev} {and}
  \bibinfo{person}{Mike Del~Balso}.} \bibinfo{year}{2018}\natexlab{}.
\newblock \showarticletitle{Horovod: fast and easy distributed deep learning in
  {TensorFlow}}.
\newblock \bibinfo{journal}{\emph{arXiv preprint arXiv:1802.05799}}
  (\bibinfo{year}{2018}).
\newblock


\bibitem[\protect\citeauthoryear{Shanbhag, Pirk, and Madden}{Shanbhag
  et~al\mbox{.}}{2018}]%
        {shanbhag2018efficient}
\bibfield{author}{\bibinfo{person}{Anil Shanbhag}, \bibinfo{person}{Holger
  Pirk}, {and} \bibinfo{person}{Samuel Madden}.}
  \bibinfo{year}{2018}\natexlab{}.
\newblock \showarticletitle{Efficient top-k query processing on massively
  parallel hardware}. In \bibinfo{booktitle}{\emph{Proceedings of the 2018
  International Conference on Management of Data}}.
  \bibinfo{pages}{1557--1570}.
\newblock


\bibitem[\protect\citeauthoryear{Shanley}{Shanley}{2003}]%
        {shanley2003infiniband}
\bibfield{author}{\bibinfo{person}{Tom Shanley}.}
  \bibinfo{year}{2003}\natexlab{}.
\newblock \bibinfo{booktitle}{\emph{InfiniBand network architecture}}.
\newblock \bibinfo{publisher}{Addison-Wesley Professional}.
\newblock


\bibitem[\protect\citeauthoryear{Shi, Chu, Cheung, and See}{Shi
  et~al\mbox{.}}{2019a}]%
        {shi2019understanding}
\bibfield{author}{\bibinfo{person}{Shaohuai Shi}, \bibinfo{person}{Xiaowen
  Chu}, \bibinfo{person}{Ka~Chun Cheung}, {and} \bibinfo{person}{Simon See}.}
  \bibinfo{year}{2019}\natexlab{a}.
\newblock \showarticletitle{Understanding top-k sparsification in distributed
  deep learning}.
\newblock \bibinfo{journal}{\emph{arXiv preprint arXiv:1911.08772}}
  (\bibinfo{year}{2019}).
\newblock


\bibitem[\protect\citeauthoryear{Shi, Wang, Zhao, Tang, Wang, Huang, and
  Chu}{Shi et~al\mbox{.}}{2019b}]%
        {shi2019distributed}
\bibfield{author}{\bibinfo{person}{Shaohuai Shi}, \bibinfo{person}{Qiang Wang},
  \bibinfo{person}{Kaiyong Zhao}, \bibinfo{person}{Zhenheng Tang},
  \bibinfo{person}{Yuxin Wang}, \bibinfo{person}{Xiang Huang}, {and}
  \bibinfo{person}{Xiaowen Chu}.} \bibinfo{year}{2019}\natexlab{b}.
\newblock \showarticletitle{A distributed synchronous SGD algorithm with global
  top-k sparsification for low bandwidth networks}. In
  \bibinfo{booktitle}{\emph{2019 IEEE 39th International Conference on
  Distributed Computing Systems (ICDCS)}}. IEEE, \bibinfo{pages}{2238--2247}.
\newblock


\bibitem[\protect\citeauthoryear{Shi, Zhao, Wang, Tang, and Chu}{Shi
  et~al\mbox{.}}{2019c}]%
        {shi2019convergence}
\bibfield{author}{\bibinfo{person}{Shaohuai Shi}, \bibinfo{person}{Kaiyong
  Zhao}, \bibinfo{person}{Qiang Wang}, \bibinfo{person}{Zhenheng Tang}, {and}
  \bibinfo{person}{Xiaowen Chu}.} \bibinfo{year}{2019}\natexlab{c}.
\newblock \showarticletitle{A Convergence Analysis of Distributed SGD with
  Communication-Efficient Gradient Sparsification.}. In
  \bibinfo{booktitle}{\emph{IJCAI}}. \bibinfo{pages}{3411--3417}.
\newblock


\bibitem[\protect\citeauthoryear{Simonyan and Zisserman}{Simonyan and
  Zisserman}{2014}]%
        {simonyan2014very}
\bibfield{author}{\bibinfo{person}{Karen Simonyan} {and}
  \bibinfo{person}{Andrew Zisserman}.} \bibinfo{year}{2014}\natexlab{}.
\newblock \showarticletitle{Very deep convolutional networks for large-scale
  image recognition}.
\newblock \bibinfo{journal}{\emph{arXiv preprint arXiv:1409.1556}}
  (\bibinfo{year}{2014}).
\newblock


\bibitem[\protect\citeauthoryear{Thakur, Rabenseifner, and Gropp}{Thakur
  et~al\mbox{.}}{2005}]%
        {thakur2005optimization}
\bibfield{author}{\bibinfo{person}{Rajeev Thakur}, \bibinfo{person}{Rolf
  Rabenseifner}, {and} \bibinfo{person}{William Gropp}.}
  \bibinfo{year}{2005}\natexlab{}.
\newblock \showarticletitle{Optimization of collective communication operations
  in MPICH}.
\newblock \bibinfo{journal}{\emph{The International Journal of High Performance
  Computing Applications}} \bibinfo{volume}{19}, \bibinfo{number}{1}
  (\bibinfo{year}{2005}), \bibinfo{pages}{49--66}.
\newblock


\bibitem[\protect\citeauthoryear{Vaswani, Shazeer, Parmar, Uszkoreit, Jones,
  Gomez, Kaiser, and Polosukhin}{Vaswani et~al\mbox{.}}{2017}]%
        {vaswani2017attention}
\bibfield{author}{\bibinfo{person}{Ashish Vaswani}, \bibinfo{person}{Noam
  Shazeer}, \bibinfo{person}{Niki Parmar}, \bibinfo{person}{Jakob Uszkoreit},
  \bibinfo{person}{Llion Jones}, \bibinfo{person}{Aidan~N Gomez},
  \bibinfo{person}{Lukasz Kaiser}, {and} \bibinfo{person}{Illia Polosukhin}.}
  \bibinfo{year}{2017}\natexlab{}.
\newblock \showarticletitle{Attention is all you need}.
\newblock \bibinfo{journal}{\emph{arXiv preprint arXiv:1706.03762}}
  (\bibinfo{year}{2017}).
\newblock


\bibitem[\protect\citeauthoryear{Wang, Wu, Zhang, Liu, Bosilca, Herlihy, and
  Fonseca}{Wang et~al\mbox{.}}{2020}]%
        {wang2020fft}
\bibfield{author}{\bibinfo{person}{Linnan Wang}, \bibinfo{person}{Wei Wu},
  \bibinfo{person}{Junyu Zhang}, \bibinfo{person}{Hang Liu},
  \bibinfo{person}{George Bosilca}, \bibinfo{person}{Maurice Herlihy}, {and}
  \bibinfo{person}{Rodrigo Fonseca}.} \bibinfo{year}{2020}\natexlab{}.
\newblock \showarticletitle{FFT-based Gradient Sparsification for the
  Distributed Training of Deep Neural Networks}. In
  \bibinfo{booktitle}{\emph{Proceedings of the 29th International Symposium on
  High-Performance Parallel and Distributed Computing}}.
  \bibinfo{pages}{113--124}.
\newblock


\bibitem[\protect\citeauthoryear{Wen, Xu, Yan, Wu, Wang, Chen, and Li}{Wen
  et~al\mbox{.}}{2017}]%
        {10.5555/3294771.3294915}
\bibfield{author}{\bibinfo{person}{Wei Wen}, \bibinfo{person}{Cong Xu},
  \bibinfo{person}{Feng Yan}, \bibinfo{person}{Chunpeng Wu},
  \bibinfo{person}{Yandan Wang}, \bibinfo{person}{Yiran Chen}, {and}
  \bibinfo{person}{Hai Li}.} \bibinfo{year}{2017}\natexlab{}.
\newblock \showarticletitle{TernGrad: Ternary Gradients to Reduce Communication
  in Distributed Deep Learning}. In \bibinfo{booktitle}{\emph{Proceedings of
  the 31st International Conference on Neural Information Processing Systems}}
  (Long Beach, California, USA) \emph{(\bibinfo{series}{NIPS'17})}.
  \bibinfo{publisher}{Curran Associates Inc.}, \bibinfo{address}{Red Hook, NY,
  USA}, \bibinfo{pages}{1508–1518}.
\newblock


\bibitem[\protect\citeauthoryear{Wu, Dong, Li, Huang, and Dai}{Wu
  et~al\mbox{.}}{2019}]%
        {wu2019network}
\bibfield{author}{\bibinfo{person}{Ke Wu}, \bibinfo{person}{Dezun Dong},
  \bibinfo{person}{Cunlu Li}, \bibinfo{person}{Shan Huang}, {and}
  \bibinfo{person}{Yi Dai}.} \bibinfo{year}{2019}\natexlab{}.
\newblock \showarticletitle{Network congestion avoidance through
  packet-chaining reservation}. In \bibinfo{booktitle}{\emph{Proceedings of the
  48th International Conference on Parallel Processing}}.
  \bibinfo{pages}{1--10}.
\newblock


\bibitem[\protect\citeauthoryear{Xu, Kostopoulou, Dutta, Li, Ntoulas, and
  Kalnis}{Xu et~al\mbox{.}}{2021}]%
        {xu2021deepreduce}
\bibfield{author}{\bibinfo{person}{Hang Xu}, \bibinfo{person}{Kelly
  Kostopoulou}, \bibinfo{person}{Aritra Dutta}, \bibinfo{person}{Xin Li},
  \bibinfo{person}{Alexandros Ntoulas}, {and} \bibinfo{person}{Panos Kalnis}.}
  \bibinfo{year}{2021}\natexlab{}.
\newblock \showarticletitle{DeepReduce: A Sparse-tensor Communication Framework
  for Federated Deep Learning}.
\newblock \bibinfo{journal}{\emph{Advances in Neural Information Processing
  Systems}}  \bibinfo{volume}{34} (\bibinfo{year}{2021}).
\newblock


\bibitem[\protect\citeauthoryear{You, Hseu, Ying, Demmel, Keutzer, and
  Hsieh}{You et~al\mbox{.}}{2019a}]%
        {you2019largelstm}
\bibfield{author}{\bibinfo{person}{Yang You}, \bibinfo{person}{Jonathan Hseu},
  \bibinfo{person}{Chris Ying}, \bibinfo{person}{James Demmel},
  \bibinfo{person}{Kurt Keutzer}, {and} \bibinfo{person}{Cho-Jui Hsieh}.}
  \bibinfo{year}{2019}\natexlab{a}.
\newblock \showarticletitle{{Large-batch training for LSTM and beyond}}. In
  \bibinfo{booktitle}{\emph{Proceedings of the International Conference for
  High Performance Computing, Networking, Storage and Analysis}}.
  \bibinfo{pages}{1--16}.
\newblock


\bibitem[\protect\citeauthoryear{You, Li, Reddi, Hseu, Kumar, Bhojanapalli,
  Song, Demmel, Keutzer, and Hsieh}{You et~al\mbox{.}}{2019b}]%
        {you2019largebert}
\bibfield{author}{\bibinfo{person}{Yang You}, \bibinfo{person}{Jing Li},
  \bibinfo{person}{Sashank Reddi}, \bibinfo{person}{Jonathan Hseu},
  \bibinfo{person}{Sanjiv Kumar}, \bibinfo{person}{Srinadh Bhojanapalli},
  \bibinfo{person}{Xiaodan Song}, \bibinfo{person}{James Demmel},
  \bibinfo{person}{Kurt Keutzer}, {and} \bibinfo{person}{Cho-Jui Hsieh}.}
  \bibinfo{year}{2019}\natexlab{b}.
\newblock \showarticletitle{Large batch optimization for deep learning:
  Training bert in 76 minutes}.
\newblock \bibinfo{journal}{\emph{arXiv preprint arXiv:1904.00962}}
  (\bibinfo{year}{2019}).
\newblock


\bibitem[\protect\citeauthoryear{You, Zhang, Hsieh, Demmel, and Keutzer}{You
  et~al\mbox{.}}{2018}]%
        {you2018imagenet}
\bibfield{author}{\bibinfo{person}{Yang You}, \bibinfo{person}{Zhao Zhang},
  \bibinfo{person}{Cho-Jui Hsieh}, \bibinfo{person}{James Demmel}, {and}
  \bibinfo{person}{Kurt Keutzer}.} \bibinfo{year}{2018}\natexlab{}.
\newblock \showarticletitle{{Imagenet training in minutes}}. In
  \bibinfo{booktitle}{\emph{Proceedings of the 47th International Conference on
  Parallel Processing}}. \bibinfo{pages}{1--10}.
\newblock


\end{thebibliography}

\clearpage

\appendix
\section{Artifact Appendix}

\subsection{Abstract}

The artifact contains the source code for our O\textit{k}-Top\textit{k} and the benchmarks used in the evaluation. It supports the results in Section~\ref{sec:eval}. To validate or reproduce the results, build this artifact and check the results returned by running benchmarks. 

\subsection{Artifact check-list (meta-information)}

{\small
\begin{itemize}
  \item {\bf Algorithm:} O\textit{k}-Top\textit{k}
  \item {\bf Compilation:} Python 3.8
  \item {\bf Model:} VGG, LSTM, BERT
  \item {\bf Data set:} Cifar-10, AN4, Wikipedia
  \item {\bf Run-time environment:} torch, mpi4py, apex
  \item {\bf Hardware:} GPU clusters
  \item {\bf Execution:} srun or mpirun
  \item {\bf Metrics:} execution time, top-1 test accuracy, WER, training loss
  \item {\bf Output:} txt files
  \item {\bf How much disk space required (approximately)?:} \\ 100 GB
  \item {\bf How much time is needed to prepare workflow (approximately)?:} Except for preparing the Wikipedia dataset, it takes about 30 minutes. Preprocessing the Wikipedia dataset takes several hours.
  \item {\bf How much time is needed to complete experiments (approximately)?:} It takes about 5 hours without pre-training BERT. BERT pre-training takes more than 100 hours.
  \item {\bf Archived?:} Yes. \url{https://doi.org/10.5281/zenodo.5808267}
\end{itemize}

\subsection{Description}

\subsubsection{How to access}

The artifact can be downloaded from \\ \url{https://github.com/Shigangli/Ok-Topk}

The artifact can also be downloaded using the DOI link \\ \url{https://doi.org/10.5281/zenodo.5808267}

\subsubsection{Hardware dependencies}
GPU clusters

\subsubsection{Software dependencies}
To run the experiments, Python 3.8 is required. Python packages, including \texttt{torch, mpi4py, apex, simplejson, tensorboard, tensorboardX, ujson, tqdm, h5py, coloredlogs, psutil, torchaudio, torchvision, numba, librosa, python-Levenshtein}, and \texttt{warpctc-pytorch}, are required.

\subsection{Installation}

1. Download the artifact and extract it in your personal \texttt{\$WORK} directory.
\\

\noindent 2. Setup Python environment and install the dependent packages.

\texttt{> conda create --name py38\_oktopk python=3.8}

\texttt{> conda activate py38\_oktopk}
\\

\texttt{> pip3 install pip==20.2.4}

\texttt{> pip install -r requirements.txt}

\texttt{> MPICC="cc -shared" pip install --no-binary=mpi4py mpi4py}
\\

\texttt{> git clone https://github.com/NVIDIA/apex}

\texttt{> cd apex}

\texttt{> pip install -v --disable-pip-version-check --no-cache-dir --global-option="--cpp\_ext" --global-option="--cuda\_ext" ./}
\\

\noindent 3. Download and preprocess data sets.

a) Cifar-10

\texttt{    > cd \$WORK/Ok-Topk/VGG/vgg\_data}

\texttt{    > wget https://www.cs.toronto.edu/$\sim$kriz/\\cifar-10-python.tar.gz}

\texttt{    > tar -zxvf cifar-10-python.tar.gz}

b) AN4

\texttt{    > cd \$WORK/Ok-Topk/LSTM/audio\_data}

\texttt{    > wget www.dropbox.com/s/l5w4up20u5pfjxf/an4.zip}

\texttt{    > unzip an4.zip}

c) Wikipedia

\texttt{    > cd \$WORK/Ok-Topk/BERT/bert/bert\_data}

\ \ Prepare the dataset according to the \texttt{README} file in this directory.

\subsection{Experiment workflow}

We run experiments on GPU clusters with \texttt{SLURM} job scheduler. In the job scripts of the artifact, we use \texttt{srun} to launch multiple processes among the computation nodes. But one can also use \texttt{mpirun} to launch multiple processes if there is no \texttt{SLURM} job scheduler on your machine. Once the jobs are finished, the results of training speed, test accuracy and training loss values for each algorithm will be output into \texttt{txt} files.

\subsection{Evaluation and expected result}

\noindent 1. To run VGG jobs

\texttt{> cd \$WORK/Ok-Topk/VGG}

\texttt{> ./sbatch\_vgg\_jobs.sh}
\\

\noindent 2. To run LSTM jobs

\texttt{> cd \$WORK/Ok-Topk/LSTM}

\texttt{> ./sbatch\_lstm\_jobs.sh}
\\

\noindent 3. To run BERT jobs

\texttt{> cd \$WORK/Ok-Topk/BERT/bert}

\texttt{> ./sbatch\_bert\_jobs.sh}
\\

\noindent 4. Check the output results

It takes about 5 hours for all jobs to be finished, which depends on the busyness of the job queue. To make sure all jobs have been finished, checking the status by:

\texttt{> squeue -u username}

If you see no job is running, then all jobs are finished.

\texttt{> cd \$WORK/Ok-Topk/VGG}

Check the 6 output \texttt{txt} files for the 6 algorithms (i.e., \texttt{dense, gaussiank, gtopk, oktopk, topkA,} and \texttt{topkDSA}), respectively, which contain the training speed and top-1 test accuracy for VGG.

\texttt{> cd \$WORK/Ok-Topk/LSTM}

Check the 6 output \texttt{txt} files for the 6 algorithms, respectively, which contain the training speed and WER values for LSTM.

\texttt{> cd \$WORK/Ok-Topk/BERT/bert}

Check the 6 output \texttt{txt} files for the 6 algorithms, respectively, which contain the training speed and training loss values for BERT.
\\

Users are expected to reproduce the results in this paper. Different software or hardware versions may lead to slightly variant results compared with the numbers reported in the paper, but it should not affect the general trends claimed in the paper, namely O\textit{k}-Top\textit{k} achieves higher training throughput and faster time-to-solution, and is more scalable than the other algorithms.

\subsection{Experiment customization}

Users can modify the variables in the job scripts to customize the experiments. For example, modify \textit{density} to change the density of the gradient; modify \textit{max-epochs} to change the number of training epochs; modify \textit{nworkers} and \textit{nodes} to change the number of processes used for training.

\subsection{Notes}
None. 

\subsection{Methodology}

Submission, reviewing and badging methodology:\\
\url{https://www.acm.org/publications/policies/artifactreview-badging}\\
\url{http://cTuning.org/ae/submission-20201122.html}\\
\url{http://cTuning.org/ae/reviewing-20201122.html}

\end{document}